\newcommand{\rd}{\,\mathrm{d}} 
\newcommand{\bst}{\boldsymbol{t}}    
\newtheorem{algorithm}{Algorithm}
  \providecommand*{\toclevel@author}{999}
  \providecommand*{\toclevel@title}{0}
\begin{document}

\title*{Discrepancy Estimates for  Acceptance-Rejection Samplers Using Stratified Inputs }
\author{Houying Zhu \and Josef Dick}
\institute{
Houying Zhu \and Josef Dick
\at School of Mathematics and Statistics,  University of New South Wales, Sydney, NSW, Australia
\email{houying.zhu@student.unsw.edu.au}, \email{josef.dick@unsw.edu.au}
%
}
\maketitle

\abstract{In this paper we propose an acceptance-rejection sampler using stratified inputs as  diver sequence. We estimate the discrepancy of the points generated by this algorithm.
First we show an upper bound on the star discrepancy of order  $N^{-1/2-1/(2s)}$. Further we prove  an upper bound on the $q$-th moment of the $L_q$-discrepancy $(\mathbb{E}[N^{q}L^{q}_{q,N}])^{1/q}$  for $2\le q\le \infty$, which is of order $N^{(1-1/s)(1-1/q)}$. We also  present an improved convergence rate for a deterministic acceptance-rejection algorithm using $(t,m,s)-$nets as driver sequence.}

\section{Introduction}
\label{sec:1}
Markov chain Monte Carlo (MCMC) sampling is a classical method widely used in  simulation.
Using a deterministic sequence as driver sequence in the MCMC procedure, known as Markov chain quasi-Monte Carlo (MCQMC) algorithm,  shows potential to improve the convergence rate.
Tribble and Owen~\cite{TribbleOwen2005} proved a consistency result for MCMC estimation for finite state spaces.
 A~construction of weakly completely uniformly distributed
sequences is also proposed. As a sequel to
the work of Tribble, Chen  \cite{ChenThesis2011} and Chen,
Dick and Owen~\cite{CDO2011} demonstrated that MCQMC algorithms using a completely uniformly distributed
sequence as driver sequence give a consistent result under certain assumptions on
the update function and Markov chain. Further, Chen~\cite
{ChenThesis2011} also showed that \mbox{MCQMC} can achieve
a convergence rate of $O(N^{-1+\delta})$ for any $\delta>0$ under
certain stronger assumptions, but he only showed the existence of a driver sequence.

In a different direction, L'Ecuyer, Lecot and Tuffin \cite{Pierre2008}  proposed a randomized quasi-Monte Carlo method which simulates multiple Markov chains  in parallel and
randomly permutes  the driver sequence in order to reduce variance.   Garber and Choppin in \cite{GC2014} adapted low
discrepancy point sets instead of random numbers in sequential Monte
Carlo (SMC). They proposed a new algorithm, named sequential quasi-Monte
Carlo (SQMC), through the use of a Hilbert  space-filling curve.
They constructed consistency and stochastic bounds based on randomized
QMC point sets for this algorithm.
More literature review about applying QMC to MCMC problems can be found
in \cite[Section 1]{CDO2011} and the references therein.

In  \cite{DRZ2013}, jointly done with Rudolf, we prove upper
bounds on the discrepancy for uniformly ergodic Markov chains driven by
a deterministic sequence rather than independent random
variables. We
show that there exists a deterministic driver sequence such that the discrepancy of the Markov
chain from the target distribution with respect to certain test sets converges with almost the usual Monte Carlo rate
of $N^{-1/2}$. In the  sequential work \cite{DR2013} done by Dick and Rudolf, they  consider  upper
bounds on the discrepancy under the assumption that the Markov chain
is  variance bounding and the driver sequence is deterministic. In particular, they proved a better existence result, showing a   discrepancy bound having a rate
of convergence of almost $N^{-1}$ under a  stronger assumption on the update function, the so called anywhere-to-anywhere condition.

The acceptance-rejection algorithm is one of the  widely used techniques for sampling from a distribution when direct simulation is not possible or expensive. The idea of this method is to determine a good choice of proposal density (also known as hat function), then sample from the proposal density with low cost. In particular, Devroye \cite{Devroye1984} gave a construction method of a proposal density for  log-concave densities and H\"ormann \cite{Hormann1995} proposed a rejection procedure, called transformed density rejection, to  construct a proposal density.  Detailed summaries of this technique and some extensions can be found in the monographs  \cite{BHL2013} and \cite{HLD04}. For many target densities, finding a good proposal density is difficult.  An alternative approach to improve efficiency is to determine  a better choice of sequences having the  designated proposal density.

 The deterministic acceptance-rejection algorithm has  been
discussed by Moskowitz and Caflisch~\cite{MC96}, Wang~\cite{Wang1999, Wang00} and Nguyen and \"{O}kten \cite{NO2013}, where  empirical evidence or a consistency result were given. Two measurements included therein are the empirical root mean square error (RMSE) and the empirical standard deviation. However,  the
discrepancy of samples is not directly investigated. Motivated by those papers, in \cite{ZD2014} we investigated  the discrepancy properties of points produced by
a totally deterministic acceptance-rejection method. We proved that the discrepancy of
samples generated by a QMC acceptance-rejection sampler is bounded
from above by $N^{-1/s}$. A lower bound shows that for any given driver
sequence, there always exists a target density such that the star
discrepancy is at most $N^{-2/(s+1)}$.

In this work we first present an acceptance-rejection algorithm using stratified inputs as  driver sequence. Stratified sampling is one of the variance reduction methods  used in Monte Carlo sampling. More precisely,  grid-based stratified  sampling improves  the RMSE to
$N^{-1/2-1/s}$ for Monte Carlo, see for instance \cite[Chapter 10]{OwenbookDraft}. In this paper, we are  interested in the discrepancy properties of points produced by the acceptance-rejection method  with stratified inputs as driver sequence. We obtain a convergence rate of the star-discrepancy of order $N^{-1/2-1/(2s)}$. Also an estimation of the $L_q$-discrepancy is considered for this setting. One would expect that the convergence rate which can be achieved using deterministic sampling methods also depends on properties of the target density function. One such  property is the number of elementary intervals (for a precise definition see Definition~\ref{Einterval} below) of a certain size needed to cover the graph of the density. We show that if the graph can be covered by a small number of elementary intervals, then an improved rate of convergence can be achieved using $(t,m,s)$-nets as driver sequence. In general, this strategy does not work with stratified sampling, unless one knows the elementary 
intervals explicitly.

The paper is organized  as follows. In Section~\ref{Sec:2} we provide the needed notation and background.  Section~\ref{Sec:3} introduces the proposed acceptance-rejection sampler using stratified inputs, where an existence upper bound on the star-discrepancy and an estimation  of the $L_q$-discrepancy  are given.  Section~\ref{Sec:4} illustrates  an improved rate of convergence when using $(t,m,s)$-nets as driver sequences.
\section{Preliminaries}\label{Sec:2}
We are interested in the discrepancy properties of samples generated by the acceptance-rejection sampler. We consider the  $L_q$-discrepancy and the star-discrepancy.
\begin{definition}[{$L_q$-discrepancy}]\label{lqdiscrepancy}  Let $1\le q\le \infty$ be a real number. For a point set $P_N$ in $[0,1]^{s}$, the $L_q$-discrepancy is defined by
\begin{equation*}
  L_{q,N}=\Big(\int_{[0,1]^s}\big|\frac{1}{N}\sum_{n=0}^{N-1}1_{[\boldsymbol{0},\boldsymbol{t})}(\boldsymbol{x}_n)-\lambda([\boldsymbol{0},\boldsymbol{t}))\big|^q \D \boldsymbol{t}\Big)^{1/q},
  \end{equation*} where $1_{[\boldsymbol{0},\boldsymbol{t})}(\boldsymbol{x}_n)=\left\{\begin{array}{ll}
1,&  \mbox{if }\boldsymbol{x}_n\in [\boldsymbol{0},\boldsymbol{t}) ,\\
0,& \mbox{otherwise}.\\
\end{array}
\right.$, $[\boldsymbol{0},\boldsymbol{t})=\prod_{j=1}^{s}[0, t_j)$ and $\lambda$ is the Lebesgue measure, with the obvious modification for $q=\infty$. The $L_{\infty,N}$-discrepancy is called the star-discrepancy which is also denoted by $D_N^*(P_N)$.
\end{definition}

The acceptance-rejection algorithm accepts all points below the graph of the density function. In order to prove bounds on the discrepancy, we assume that the set below the graph of the density function admits a so-called Minkowski content. We introduce the Minkowski content in the following. For a set $A$ we denote the boundary of $A$ by $\partial A$.
\begin{definition}[Minkowski content] For a set $A\subseteq [0,1]^s$, let
\begin{equation*}
  \mathcal{M}(\partial A)=\lim_{\varepsilon\to 0}\frac{\lambda((\partial A)_{\varepsilon})}{2\varepsilon},
  \end{equation*} where $(\partial A)_{\varepsilon}=\{\boldsymbol{x}\in\mathbb{R}^s| \| \boldsymbol{x}-\boldsymbol{y} \|\le \varepsilon \mbox{ for } \boldsymbol{y}\in \partial A \}$ and $\|\cdot\|$ denotes the Euclidean norm. If $\mathcal{M}(\partial A)$ (abbreviated as $\mathcal{M}_A$ without causing confusion) exists and is finite, then $\partial A $ is said to admit an $(s-1)-$dimensional Minkowski content.
\end{definition}

\section{Acceptance-Rejection Sampler Using Stratified Inputs }
\label{Sec:3} We now present the acceptance-rejection algorithm using stratified inputs.
\begin{algorithm}\label{algorithmSAR} Let the target density $\psi:[0,1]^{s-1}\to \mathbb{R}_+$ where $s\geq 2$, be given. Assume that there exists a constant $L<\infty$ such that $\psi(\boldsymbol{z}) \leq L$ for  all $\boldsymbol{z}\in[0,1]^{s-1}$. Let $A=\{\boldsymbol{z}\in[0,1]^s:\psi(z_1,\ldots,z_{s-1})\geq L z_s\}$  and assume that $\partial A$ admits an $(s-1)-$dimensional Minkowski content.
\begin{itemize}
  \item [i)~] Let $M\in\mathbb{N}$ and let $\{Q_0,\ldots,Q_{M-1}\}$ be a disjoint covering with of $[0,1]^s$ with $Q_i$ of the form $\prod_{j=1}^{s}\left[\frac{c_j}{M^{1/s}},\frac{c_j+1}{M^{1/s}}\right)$ with $0\le c_j\le \lceil M^{1/s}\rceil-1$. Then $\lambda(Q_i)=1/M$ for all $0\le i\le M-1$. Generate a point set $P_M=\{\boldsymbol{x}_0,\ldots, \boldsymbol{x}_{M-1}\}$  such that there is exactly one point of $P_M$ uniformly distributed in each sub-cube $Q_i$.
  \item [ii)~]Use the acceptance-rejection method for the points in $P_M$ with respect to the density $\psi$, i.e. we accept the point $\boldsymbol{x}_n$ if $\boldsymbol{x}_n \in A$, otherwise reject. Let $P_N^{(s)}=A\cap P_M=\{\boldsymbol{z}_0,\ldots, \boldsymbol{z}_{N-1}\}$ be the sample set we accept.
  \item [iii)~]  Project the points we accepted $P_N^{(s)}$ onto  the first $(s-1)$ coordinates. Let
    $Y_N^{(s-1)}=\{\boldsymbol{y}_0,\ldots, \boldsymbol{y}_{N-1}\}$ be the projections of the points  $P_N^{(s)}=\{\boldsymbol{z}_0,\ldots, \boldsymbol{z}_{N-1}\}$.
 \item [iv)~] Return the point set $Y_N^{(s-1)}$.
\end{itemize}
\end{algorithm}

\subsection{Existence Result of Samples with Small Star Discrepancy}
Here we present some results that we will use to prove an upper bound for the star discrepancy with respect to points generated by the acceptance-rejection sampler using stratified inputs.  For any $0\le\delta\le 1$, a set $\Gamma$ of anchored boxes $[\boldsymbol{0},\boldsymbol{x}]\subseteq [0,1]^{s}$ is called a $\delta$-cover of the set of anchored boxes $[\boldsymbol{0},\boldsymbol{t}]\subseteq [0,1]^{s}$ if for every point $\boldsymbol{t}\in [0,1]^s$, there exist $[\boldsymbol{0},\boldsymbol{x}],[\boldsymbol{0},\boldsymbol{y}]\in \Gamma$ such that $[\boldsymbol{0},\boldsymbol{x}]\subseteq [\boldsymbol{0},\boldsymbol{t}]\subseteq [\boldsymbol{0},\boldsymbol{y}]$ and $\lambda( [\boldsymbol{0},\boldsymbol{y}]\setminus [\boldsymbol{0},\boldsymbol{x}])\le \delta$. The following result on the size of the $\delta$-cover is obtained from \cite[Theorem 1.15]{G2008}.

\begin{lemma}\label{delta_coverOpt}For any $s$ and $\delta$ there exists a $\delta$-cover of the set of anchored boxes $[\boldsymbol{0},\boldsymbol{t}]\subseteq [0,1]^{s}$  which has cardinality at most $(2e)^s(\delta^{-1}+1)^s$.\end{lemma}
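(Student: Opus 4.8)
The statement is Theorem~1.15 of \cite{G2008}, so the plan is to reproduce the proof given there, which rests on a single idea --- a \emph{volume-adapted grid} --- followed by a combinatorial count. As a warm-up, note that the naive uniform grid of mesh $\delta/s$ in every coordinate already yields a $\delta$-cover: rounding each coordinate of $\boldsymbol{t}$ down, respectively up, to the grid produces anchored boxes $[\boldsymbol{0},\boldsymbol{x}]\subseteq[\boldsymbol{0},\boldsymbol{t}]\subseteq[\boldsymbol{0},\boldsymbol{y}]$, and the telescoping identity
\[
\prod_{j=1}^{s}y_j-\prod_{j=1}^{s}x_j=\sum_{k=1}^{s}\Bigl(\prod_{j<k}x_j\Bigr)(y_k-x_k)\Bigl(\prod_{j>k}y_j\Bigr)\le\sum_{k=1}^{s}(y_k-x_k)\le s\cdot\frac{\delta}{s}=\delta
\]
controls the volume gap. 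Such a cover has cardinality $(s\delta^{-1}+O(1))^s$, a factor $\approx s^s$ too large; the improvement to $(2e)^s$ comes from observing that the mesh in coordinate $k$ need not be $\delta/s$ uniformly --- it only has to make the $k$-th summand above at most $\delta/s$, and that summand carries the weight $\prod_{j<k}x_j\le 1$.

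Accordingly I would build $\Gamma$ recursively. Having fixed grid values $x_1,\dots,x_{k-1}$ in the first $k-1$ coordinates, use in coordinate $k$ the grid of mesh $\tfrac{\delta}{2s}\,(\prod_{j<k}x_j)^{-1}$ inside $[0,1]$, with $0$ and $1$ adjoined (and the convention that once a coordinate is $0$ the remaining ones are $0$); the extra factor $2$ leaves room to round both down and up. Let $\Gamma$ be the set of all anchored boxes whose vertices arise this way. For a given $\boldsymbol{t}$, rounding down through this tree produces a lower vertex $\boldsymbol{x}$ and rounding up produces an upper vertex $\boldsymbol{y}$, both vertices of members of $\Gamma$, with $[\boldsymbol{0},\boldsymbol{x}]\subseteq[\boldsymbol{0},\boldsymbol{t}]\subseteq[\boldsymbol{0},\boldsymbol{y}]$. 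Splitting $y_k-x_k\le(y_k-t_k)+(t_k-x_k)$, bounding $y_k-t_k$ and $t_k-x_k$ by the mesh widths of the upper and of the lower branch respectively, and using $\prod_{j<k}x_j\le\prod_{j<k}y_j$, one sees that each summand in the telescoping identity is at most $\delta/s$, so the volume gap is at most $\delta$.

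It remains to bound $|\Gamma|$. Writing $x_k=i_k\cdot\tfrac{\delta}{2s}\,(\prod_{j<k}x_j)^{-1}$ one gets $\prod_{j\le k}x_j=i_k\cdot\tfrac{\delta}{2s}$, so the constraint $x_k\le 1$ becomes $i_k\le i_{k-1}$; hence a non-degenerate vertex is essentially encoded by a non-increasing integer sequence $m\ge i_1\ge i_2\ge\dots\ge i_s\ge 1$ with $m=\lceil 2s\delta^{-1}\rceil$, of which there are at most $\binom{m+s-1}{s}\le\frac{(m+s-1)^s}{s!}\le\frac{(2s)^s}{s!}(\delta^{-1}+1)^s$. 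Since $\frac{s^s}{s!}\le e^s$ by the elementary Stirling bound $e^s\ge s^s/s!$, this is at most $(2e)^s(\delta^{-1}+1)^s$, and the adjustments for the boundary values $0$ and $1$ do not affect this estimate beyond the stated constant.

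I expect the genuine obstacle to be exactly this last sentence: getting the bookkeeping to come out with the constant $(2e)^s$ rather than a larger multiple requires treating carefully the boundary branches of the tree, checking that one fixed set $\Gamma$ serves as the source of both the lower and the upper box for every $\boldsymbol{t}$ (the two use different partial products in their meshes), and distributing the error budget $\delta$ across the $s$ coordinates tightly enough that the per-coordinate counts multiply out to the claimed bound. For this I would simply follow the detailed argument of \cite[Theorem~1.15]{G2008}.
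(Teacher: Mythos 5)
The paper does not actually prove Lemma~\ref{delta_coverOpt}: it imports the statement verbatim from \cite[Theorem~1.15]{G2008}, so the only benchmark is that reference, not an internal argument. Your reconstruction of Gnewuch's volume-adapted grid argument --- the telescoping identity for $\prod_j y_j-\prod_j x_j$, the adaptive mesh $\tfrac{\delta}{2s}\bigl(\prod_{j<k}x_j\bigr)^{-1}$ with the split $y_k-x_k\le(y_k-t_k)+(t_k-x_k)$ and the monotonicity $\prod_{j<k}x_j\le\prod_{j<k}y_j$, the identification of admissible vertices with non-increasing integer sequences counted by $\binom{m+s-1}{s}\le (m+s-1)^s/s!$, and the bound $s^s/s!\le e^s$ --- is the correct skeleton of that proof and the arithmetic does close to $(2e)^s(\delta^{-1}+1)^s$, with only the boundary/degenerate-coordinate bookkeeping deferred to the cited source, which is no more than the paper itself does.
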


By a simple generalization, the following result holds for our setting.

\begin{lemma}\label{delta_coverOptA} Let $\psi:[0,1]^{s-1}\to \mathbb{R}_+$, where $s\geq 2$, be a function. Assume that there exists a constant $L<\infty$ such that $\psi(\boldsymbol{z}) \leq L$ for  all $\boldsymbol{z}\in[0,1]^{s-1}$. Let
 $A=\{\boldsymbol{z}\in[0,1]^s:\psi(z_1,\ldots,z_{s-1})\geq L z_s\}$ and $J_{\boldsymbol{t}}^*=([\boldsymbol{0},\boldsymbol{t})\times [0,1])\cap A $.
Let $(A, \mathcal{B}(A), \lambda)$ be a probability space where $\mathcal{B}(A)$
is the Borel $\sigma$-algebra of $A$. Define the set $\mathcal{A} \subset \mathcal{B}(A)$
of test sets by
$$\mathcal{A} = \{J_{\boldsymbol{t}}^*: \boldsymbol{t} \in [0,1]^{s-1} \}.$$
Then for any $\delta > 0$ there exists a $\delta$-cover $\Gamma_\delta$ of $\mathcal{A}$
with $$|\Gamma_\delta| \le (2e)^{s-1}(\delta^{-1}+1)^{s-1}.$$
\end{lemma}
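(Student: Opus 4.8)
The plan is to transport the bound of Lemma~\ref{delta_coverOpt} from dimension $s-1$ to the family $\mathcal{A}$ by means of the order-preserving correspondence $[\boldsymbol{0},\boldsymbol{x}]\mapsto J^*_{\boldsymbol{x}}=([\boldsymbol{0},\boldsymbol{x})\times[0,1])\cap A$ between anchored boxes in $[0,1]^{s-1}$ and test sets in $\mathcal{A}$. The structural fact that makes this work is that $J^*_{\boldsymbol{t}}$ depends on $\boldsymbol{t}$ only through the box $[\boldsymbol{0},\boldsymbol{t})$, and monotonically so: whenever $x_j\le t_j\le y_j$ for all $j=1,\dots,s-1$ we have $[\boldsymbol{0},\boldsymbol{x})\subseteq[\boldsymbol{0},\boldsymbol{t})\subseteq[\boldsymbol{0},\boldsymbol{y})$, and since intersecting with the fixed set $A$ and forming the product with $[0,1]$ both preserve set inclusions, it follows that $J^*_{\boldsymbol{x}}\subseteq J^*_{\boldsymbol{t}}\subseteq J^*_{\boldsymbol{y}}$.

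Concretely, I would first apply Lemma~\ref{delta_coverOpt} with $s$ replaced by $s-1$ to obtain a $\delta$-cover $\widetilde{\Gamma}_{\delta}$ of the anchored boxes in $[0,1]^{s-1}$ with $|\widetilde{\Gamma}_{\delta}|\le(2e)^{s-1}(\delta^{-1}+1)^{s-1}$, and then set $\Gamma_{\delta}=\{J^*_{\boldsymbol{x}}:[\boldsymbol{0},\boldsymbol{x}]\in\widetilde{\Gamma}_{\delta}\}\subseteq\mathcal{A}$, so that $|\Gamma_{\delta}|\le|\widetilde{\Gamma}_{\delta}|$ yields the claimed cardinality bound at once. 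It then remains to verify that $\Gamma_{\delta}$ is a $\delta$-cover of $\mathcal{A}$: given $J^*_{\boldsymbol{t}}\in\mathcal{A}$, choose $[\boldsymbol{0},\boldsymbol{x}],[\boldsymbol{0},\boldsymbol{y}]\in\widetilde{\Gamma}_{\delta}$ with $[\boldsymbol{0},\boldsymbol{x}]\subseteq[\boldsymbol{0},\boldsymbol{t}]\subseteq[\boldsymbol{0},\boldsymbol{y}]$ and $\lambda([\boldsymbol{0},\boldsymbol{y}]\setminus[\boldsymbol{0},\boldsymbol{x}])\le\delta$; by the monotonicity noted above $J^*_{\boldsymbol{x}}\subseteq J^*_{\boldsymbol{t}}\subseteq J^*_{\boldsymbol{y}}$ with $J^*_{\boldsymbol{x}},J^*_{\boldsymbol{y}}\in\Gamma_{\delta}$, and one estimates $\lambda(J^*_{\boldsymbol{y}}\setminus J^*_{\boldsymbol{x}})$ using the inclusion $J^*_{\boldsymbol{y}}\setminus J^*_{\boldsymbol{x}}\subseteq([\boldsymbol{0},\boldsymbol{y})\setminus[\boldsymbol{0},\boldsymbol{x}))\times[0,1]$.

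The only step that calls for a bit of care — which is what I would regard as the (mild) main obstacle — is this last measure comparison, together with fixing the convention for $\lambda$ in the probability space $(A,\mathcal{B}(A),\lambda)$. Using that the projection onto the first $s-1$ coordinates does not increase $s$-dimensional Lebesgue measure, one gets $\lambda(J^*_{\boldsymbol{y}}\setminus J^*_{\boldsymbol{x}})\le\lambda_{s-1}([\boldsymbol{0},\boldsymbol{y})\setminus[\boldsymbol{0},\boldsymbol{x}))=\lambda([\boldsymbol{0},\boldsymbol{y}]\setminus[\boldsymbol{0},\boldsymbol{x}])\le\delta$, the middle equality being just the fact that passing between half-open and closed boxes alters the set only on a $\lambda$-null boundary. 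This is precisely where one uses that $A$ and the additional coordinate direction are held fixed throughout, so no genuine difficulty arises and the generalization is indeed routine. Hence $\Gamma_{\delta}$ is a $\delta$-cover of $\mathcal{A}$ of the required size, which proves the lemma.
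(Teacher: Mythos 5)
Your proposal is correct and follows essentially the same route as the paper: both transport the $\delta$-cover of anchored boxes in $[0,1]^{s-1}$ from Lemma~\ref{delta_coverOpt} to $\mathcal{A}$ via the map $[\boldsymbol{0},\boldsymbol{x}]\mapsto([\boldsymbol{0},\boldsymbol{x}]\times[0,1])\cap A$, use the monotonicity of this map to preserve the sandwiching inclusions, and bound the measure of the difference set by $\lambda([\boldsymbol{0},\boldsymbol{y}]\setminus[\boldsymbol{0},\boldsymbol{x}])\le\delta$. Your extra remarks on the half-open versus closed boxes and the projection argument only make explicit what the paper leaves implicit.
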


\begin{proof} Let
\begin{equation*}
  \Gamma_\delta:=\{ ([\boldsymbol{0},\boldsymbol{x}]\times [0,1])\cap A, [\boldsymbol{0},\boldsymbol{x}] \in \Gamma \},
\end{equation*} where $\Gamma$ is a $\delta$-cover of the set of anchored boxes $[\boldsymbol{0},\boldsymbol{t}]\subseteq [0,1]^{s-1}$ with $|\Gamma|\le (2e)^{s-1}(\delta^{-1}+1)^{s-1}$. By Lemma~\ref{delta_coverOpt} such a $\delta$-cover $\Gamma$ exists.
For any set $J_t^*\in \mathcal{A}$, there exist $([\boldsymbol{0},\boldsymbol{x}]\times [0,1])\cap A, ([\boldsymbol{0},\boldsymbol{y}]\times [0,1])\cap A \in \Gamma_\delta$ such that
$$ ([\boldsymbol{0},\boldsymbol{x}]\times [0,1])\cap A \subseteq J_t^*\subseteq ([\boldsymbol{0},\boldsymbol{y}]\times [0,1])\cap A,$$
and $$\lambda\Big(\big(([\boldsymbol{0},\boldsymbol{y}]\times [0,1])\cap A \big )\setminus \big(([\boldsymbol{0},\boldsymbol{x}]\times [0,1])\cap A \big)\Big)\le\lambda( [\boldsymbol{0},\boldsymbol{y}]\setminus [\boldsymbol{0},\boldsymbol{x}])\le \delta. $$

Hence $ \Gamma_\delta$ forms a $\delta$-cover of $\mathcal{A}$ and $|\Gamma_\delta|=|\Gamma|$.
\end{proof}

\begin{lemma}\label{J_Intersect} Let the unnormalized  density function $\psi:[0,1]^{s-1}\to \mathbb{R}_+$, with $s\geq 2$, be given. Assume that  there exists a  constant $L< \infty$ such that $\psi(\boldsymbol{z})\le L$ for all $\boldsymbol{z}\in [0,1]^{s-1}$.
 \begin{itemize}
   \item  Let $M\in \mathbb{N}$ and let the disjoint subsets $Q_0,\ldots,Q_{M-1}$ be of the form  $\prod_{i=1}^{s}\left[\frac{c_j}{M^{1/s}},\frac{c_j+1}{M^{1/s}}\right)$ where $0\le c_j\le \lceil M^{1/s}\rceil-1 $. These sets form a disjoint covering of  $[0,1]^s$ and each set $Q_i$ satisfies $\lambda(Q_i)=1/M$.
   \item  Let
$$A=\{\boldsymbol{z}\in[0,1]^s:\psi(z_1,\ldots,z_{s-1})\geq L z_s\}.$$
Assume that $\partial A $  admits an $(s-1)-$dimensional Minkowski content $\mathcal{M}_A$.
   \item  Let $J_{\boldsymbol{t}}^*=([\boldsymbol{0},\boldsymbol{t})\times [0,1])\bigcap A $, where $\boldsymbol{t}=(t_1,\ldots,t_{s-1})\in[0,1]^{s-1}$.
 \end{itemize}
Then there exists an $M_0\in\mathbb{N}$ such that $J_{\boldsymbol{t}}^*$ at most intersects with $3s^{1/2}\mathcal{M}_AM^{1-1/s}$ subcubes $Q_i$ for all $M\ge M_0$.
\end{lemma}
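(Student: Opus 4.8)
The plan is to read the assertion as a bound on the number of subcubes that $J_{\boldsymbol{t}}^*$ genuinely cuts---meets but is not contained in---and to reduce this to the number of subcubes that meet the boundary $\partial A$, which is then controlled by a volume estimate based on the Minkowski content. First I would establish the geometric dichotomy: if $Q_i$ meets $J_{\boldsymbol{t}}^*$ but $Q_i\not\subseteq J_{\boldsymbol{t}}^*$, then either $Q_i$ contains a point of $A$ together with a point of its complement---so that $Q_i\cap\partial A\neq\emptyset$---or else $Q_i$ lies entirely in $A$ while straddling one of the $s-1$ bounding hyperplanes $z_j=t_j$. The latter subcubes fill at most $s-1$ single layers, hence number $O\!\big(s\,M^{1-1/s}\big)$, and their contribution is handled separately in (resp.\ absorbed into) the discrepancy estimate; so it suffices to bound $|\mathcal{I}|$, where $\mathcal{I}:=\{\,i:Q_i\cap\partial A\neq\emptyset\,\}$.

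For the volume estimate, note that $Q_i$ is a cube of side $M^{-1/s}$, hence of Euclidean diameter $\varepsilon_M:=s^{1/2}M^{-1/s}$. Therefore $Q_i\cap\partial A\neq\emptyset$ forces $Q_i\subseteq(\partial A)_{\varepsilon_M}$, since every point of $Q_i$ is then within distance $\varepsilon_M$ of $\partial A$. As the subcubes are pairwise disjoint with $\lambda(Q_i)=1/M$,
\[
\frac{|\mathcal{I}|}{M}=\sum_{i\in\mathcal{I}}\lambda(Q_i)=\lambda\Big(\bigcup_{i\in\mathcal{I}}Q_i\Big)\le\lambda\big((\partial A)_{\varepsilon_M}\big),
\]
so $|\mathcal{I}|\le M\,\lambda\big((\partial A)_{\varepsilon_M}\big)$.

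Finally I would feed in the Minkowski content: since $\lambda((\partial A)_\varepsilon)/(2\varepsilon)\to\mathcal{M}_A$ as $\varepsilon\to0$ (and $\mathcal{M}_A>0$ in the nondegenerate case), there is $\varepsilon_0>0$ with $\lambda((\partial A)_\varepsilon)\le 3\varepsilon\,\mathcal{M}_A$ for all $0<\varepsilon\le\varepsilon_0$---any constant strictly above $2$ is eventually admissible. Taking $M_0$ so that $\varepsilon_M=s^{1/2}M^{-1/s}\le\varepsilon_0$ whenever $M\ge M_0$, we obtain for such $M$
\[
|\mathcal{I}|\le M\cdot 3\varepsilon_M\,\mathcal{M}_A=3\,s^{1/2}\,\mathcal{M}_A\,M^{1-1/s},
\]
which is the claimed bound. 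I expect the one genuinely delicate point---the \emph{main obstacle}---to be the first step: isolating precisely which subcubes the intended application requires counted, and confirming that the slab contribution from $\{z_j=t_j\}$ is really subsumed; the volume count and the passage to the Minkowski content are then routine.
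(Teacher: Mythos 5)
Your core argument is exactly the paper's: each subcube has diameter $\varepsilon_M=s^{1/2}M^{-1/s}$, so every subcube meeting $\partial A$ is contained in $(\partial A)_{\varepsilon_M}$; disjointness of the $Q_i$ plus the Minkowski-content estimate $\lambda((\partial A)_{\varepsilon})\le 3\varepsilon\,\mathcal{M}_A$ for $\varepsilon\le\varepsilon_0$ (any constant strictly above $2$ being eventually admissible, as you say) then yields $|\mathcal{I}|\le 3s^{1/2}\mathcal{M}_A M^{1-1/s}$ for $M\ge M_0$. So the proposal is correct and essentially identical to the paper's proof. The one place you go beyond the paper is the preliminary dichotomy: you rightly note that a subcube cut by $J_{\boldsymbol{t}}^*$ need not meet $\partial A$, since it may lie inside $A$ while straddling a face $z_j=t_j$ of the anchored box, contributing up to about $(s-1)M^{1-1/s}$ additional cubes. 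The paper silently ignores these (its index set $J$ is defined by $Q_i\cap A\neq\emptyset$, evidently a typo for $Q_i\cap\partial A\neq\emptyset$, and the lateral faces are never discussed), and as literally stated the constant $3s^{1/2}\mathcal{M}_A$ does not automatically absorb this slab term unless $\mathcal{M}_A$ is of order at least $\sqrt{s}$; your instinct that it is of the same order $M^{1-1/s}$ and only perturbs constants in the downstream discrepancy bounds is correct, but neither your write-up nor the paper actually folds it into the stated constant.
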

The result can be obtained utilizing  a similar proof as in \cite[Theorem~4.3]{{HeOwen2014}}. For the sake of completeness, we repeat the proof here.

\begin{proof} Since $\partial A $  admits an $(s-1)-$dimensional Minkowski content, it follows that
\begin{equation*}
  \mathcal{M}_A=\lim_{\varepsilon\to 0}\frac{\lambda((\partial A)_{\varepsilon})}{2\varepsilon} <\infty.
\end{equation*} Thus by the definition of the limit, for any fixed $\vartheta>2$, there exists $\varepsilon_0$ such that
$ \lambda((\partial A)_{\varepsilon})\le \vartheta \varepsilon \mathcal{M}_{A}$ whenever $\varepsilon\le \varepsilon_0$.

Based on the form of the subcube given by $\prod_{i=1}^{s}\left[\frac{c_j}{M^{1/s}},\frac{c_j+1}{M^{1/s}}\right)$, the largest diagonal length is $\sqrt{s}M^{-1/s}$. We can assume that
$M> (\sqrt{s}/\varepsilon_0)^s$, then $\sqrt{s}M^{-1/s}=:\varepsilon<\varepsilon_0$ and $\bigcup_{i\in J}Q_i\subseteq (\partial A)_{\varepsilon}$, where
$J$ is the index set for the sets $Q_i$ which satisfy $Q_i\cap A\neq \emptyset$. Therefore
\begin{equation*}
  |J|\le \frac{\lambda((\partial A)_{\varepsilon})}{\lambda(Q_i)}\le \frac{\vartheta \varepsilon  \mathcal{M}_A }{M^{-1}}=\sqrt{s}\vartheta\mathcal{M}_AM^{1-1/s}.
\end{equation*} Without loss of generality, we can set $\vartheta=3$, which completes the proof.
\end{proof}

\begin{remark}  Ambrosio et al \cite{ACV2008} found that for a closed set $A\subset \mathbb{R}^s$, if $A$ has a Lipschitz boundary, then $\partial A$ admits an $(s-1)$-dimensional Minkowski content. In particular,  a convex set $A\subset [0,1]^s$ has an $(s-1)$-dimensional Minkowski content. Note that the  surface area of a convex set in $[0,1]^s$ is bounded by the surface area of the unit cube $[0,1]^s$, which is $2s$ and it was also shown by Niederreiter and Wills \cite{NW75} that $2s$  is best possible. It follows that the Minkowski content $\mathcal{M}_A\le 2s$ when $A$ is a convex set in $[0,1]^s$.
\end{remark}

\begin{lemma}\label{BoundN} Suppose that  all the assumptions of Lemma~\ref{J_Intersect} are satisfied. Let $N$ be the number of points accepted by Algorithm~\ref{algorithmSAR}. Then we have
\begin{equation*}
  M(\lambda(A)-{3s^{1/2}\mathcal{M}_A}{M^{-1/s}})\le N\le M(\lambda(A)+{3s^{1/2}\mathcal{M}_A}{M^{-1/s}}).
\end{equation*} \end{lemma}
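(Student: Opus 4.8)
The plan is to count the accepted points $N = |A \cap P_M|$ by sorting the subcubes $Q_i$ into three classes according to their relationship with $A$: those entirely contained in $A$, those entirely outside $A$, and those that straddle the boundary $\partial A$. Write $J_{\mathrm{in}}$, $J_{\mathrm{out}}$, $J_{\mathrm{bd}}$ for the corresponding index sets. Since exactly one point of $P_M$ lies in each $Q_i$, every subcube in $J_{\mathrm{in}}$ contributes exactly one accepted point, every subcube in $J_{\mathrm{out}}$ contributes none, and each subcube in $J_{\mathrm{bd}}$ contributes either zero or one. Hence
\begin{equation*}
  |J_{\mathrm{in}}| \le N \le |J_{\mathrm{in}}| + |J_{\mathrm{bd}}|.
\end{equation*}

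Next I would bound the three quantities in terms of $\lambda(A)$ and $M$. Because the $Q_i$ partition $[0,1]^s$ and each has measure $1/M$, we have $\sum_i \lambda(Q_i \cap A) = \lambda(A)$, and this sum splits as $|J_{\mathrm{in}}|/M + \sum_{i \in J_{\mathrm{bd}}} \lambda(Q_i \cap A)$, with the latter sum lying between $0$ and $|J_{\mathrm{bd}}|/M$. Therefore
\begin{equation*}
  \lambda(A) - \frac{|J_{\mathrm{bd}}|}{M} \le \frac{|J_{\mathrm{in}}|}{M} \le \lambda(A).
\end{equation*}
Combining the two displays yields $M\lambda(A) - |J_{\mathrm{bd}}| \le N \le M\lambda(A) + |J_{\mathrm{bd}}|$.

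Finally I would control $|J_{\mathrm{bd}}|$. A subcube that straddles $\partial A$ certainly meets $A$ (indeed it meets both $A$ and its complement), so $J_{\mathrm{bd}}$ is contained in the index set $J$ of Lemma~\ref{J_Intersect}, namely the subcubes $Q_i$ with $Q_i \cap A \neq \emptyset$ that also touch the boundary; more carefully, any $Q_i$ not entirely inside $A$ and not entirely outside $A$ has points of $\partial A$ in its closure, so it lies in the $\varepsilon$-neighbourhood $(\partial A)_\varepsilon$ with $\varepsilon = \sqrt{s}\,M^{-1/s}$ used in the proof of Lemma~\ref{J_Intersect}. That proof shows there exists $M_0$ such that for all $M \ge M_0$ the number of such subcubes is at most $3s^{1/2}\mathcal{M}_A M^{1-1/s}$. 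Substituting $|J_{\mathrm{bd}}| \le 3s^{1/2}\mathcal{M}_A M^{1-1/s}$ into the bound above and noting $3s^{1/2}\mathcal{M}_A M^{1-1/s} = M \cdot 3s^{1/2}\mathcal{M}_A M^{-1/s}$ gives exactly the claimed inequalities.

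The only genuinely delicate point is the combinatorial bookkeeping of the boundary subcubes: one must check that a $Q_i$ which is neither a subset of $A$ nor disjoint from $A$ really does lie within $(\partial A)_\varepsilon$, so that the Minkowski-content estimate of Lemma~\ref{J_Intersect} applies to $|J_{\mathrm{bd}}|$ and not merely to a larger set. This follows from a connectedness argument on the (convex, hence connected) cube $Q_i$: if $Q_i$ meets both $A$ and $[0,1]^s \setminus A$, then $\overline{Q_i} \cap \partial A \ne \emptyset$, and since $\operatorname{diam}(Q_i) = \sqrt{s}\,M^{-1/s} = \varepsilon$ we get $Q_i \subseteq (\partial A)_\varepsilon$. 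Everything else is the routine counting already carried out above.
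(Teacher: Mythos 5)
Your proof is correct and follows essentially the same route as the paper: both arguments reduce the fluctuation of $N$ to the count of subcubes straddling $\partial A$, bounded by $3s^{1/2}\mathcal{M}_A M^{1-1/s}$ via the Minkowski-content estimate underlying Lemma~\ref{J_Intersect}. The paper phrases the middle step through $\mathbb{E}[N]=M\lambda(A)$ and the deviation bound $|N-\mathbb{E}[N]|\le l$, whereas you count $|J_{\mathrm{in}}|$ directly; this is the same combinatorial content, and your explicit check that a straddling cube lies in $(\partial A)_{\varepsilon}$ is in fact slightly more careful than the paper's.
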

\begin{proof}The number of points we accept in Algorithm~\ref{algorithmSAR} is a random number since the driver sequence given by stratified inputs  is random. Let $\mathbb{E}(N)$ be the expectation of $N$.  The number of $Q_i$ which have non-empty intersection with $A$ is bounded by $l=3s^{1/2}\mathcal{M}_AM^{1-1/s}$ from Lemma~\ref{J_Intersect}.
Thus
\begin{equation}\label{NandM}
  \mathbb{E}[N]-l\le N\le \mathbb{E}[N]+l.
\end{equation} Further we have
\begin{equation}\label{ExpectN}
  \mathbb{E}[N]=\sum_{i=0}^{M-1}\frac{\lambda(Q_i\cap A)}{\lambda(Q_i)}=M\lambda(A).
\end{equation} Combining \eqref{NandM} and \eqref{ExpectN} and substituting $l=3s^{1/2}\mathcal{M}_AM^{1-1/s}$, one obtains the desired result.
\end{proof}

Before we start to prove the upper bound on the star-discrepancy, our method requires the well-known Bernstein-Chernoff inequality.
\begin{lemma}\label{Bernstein-Chernoff}\cite[Lemma~2]{Beck1984} Let $\eta_0,\ldots,\eta_{l-1}$ be independent random variables with $\mathbb{E}(\eta_i)=0$ and $|\eta_i|\le 1$ for all $0\le i\le l-1$. Denote by $\sigma_i^2$ the variance of $\eta_i$, i.e. $\sigma_i^2=\mathbb{E}(\eta_i^2)$. Set $\beta=(\sum_{i=0}^{l-1}\sigma_i^2)^{1/2}$. Then for any $\gamma>0$ we have
\begin{equation*}\mathbb{P}\Big(\big|\sum_{i=0}^{l-1}\eta_i\big|\ge \gamma\Big)\le\left\{\begin{array}{ll}
2e^{-\gamma/4},&  \mbox{if }\gamma\ge\beta^2,\\
2e^{-\gamma^2/4\beta^2},& \mbox{if }\gamma\le\beta^2.\\
\end{array}
\right.\end{equation*}
\end{lemma}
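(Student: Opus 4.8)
The plan is to use the classical exponential moment (Chernoff) method, followed by an optimization over the free parameter. First I would estimate the one-sided tail. Fix $\lambda>0$; applying Markov's inequality to $\exp\!\big(\lambda\sum_{i=0}^{l-1}\eta_i\big)$ and using the independence of the $\eta_i$ gives
\[
\mathbb{P}\Big(\sum_{i=0}^{l-1}\eta_i\ge\gamma\Big)\le e^{-\lambda\gamma}\,\mathbb{E}\Big[e^{\lambda\sum_{i=0}^{l-1}\eta_i}\Big]=e^{-\lambda\gamma}\prod_{i=0}^{l-1}\mathbb{E}\big[e^{\lambda\eta_i}\big].
\]

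Next I would bound each factor. Since $|\eta_i|\le 1$, for every integer $k\ge 2$ we have $\eta_i^k=\eta_i^{k-2}\eta_i^2\le\eta_i^2$, hence $\mathbb{E}(\eta_i^k)\le\sigma_i^2$; combining this with $\mathbb{E}(\eta_i)=0$ and expanding the exponential series (dominated by $e^{\lambda}$, so termwise integration is legitimate),
\[
\mathbb{E}\big[e^{\lambda\eta_i}\big]=1+\sum_{k\ge 2}\frac{\lambda^k\,\mathbb{E}(\eta_i^k)}{k!}\le 1+\sigma_i^2\sum_{k\ge 2}\frac{\lambda^k}{k!}=1+\sigma_i^2\big(e^{\lambda}-1-\lambda\big)\le\exp\!\big(\sigma_i^2(e^{\lambda}-1-\lambda)\big).
\]
Taking the product over $i$ and recalling $\beta^2=\sum_{i=0}^{l-1}\sigma_i^2$ yields
\[
\mathbb{P}\Big(\sum_{i=0}^{l-1}\eta_i\ge\gamma\Big)\le\exp\!\big(-\lambda\gamma+\beta^2(e^{\lambda}-1-\lambda)\big)\qquad\text{for every }\lambda>0.
\]

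Then I would optimize the exponent: its $\lambda$-derivative vanishes at $\lambda^\ast=\log(1+\gamma/\beta^2)>0$, and inserting this value gives the Bennett-type estimate
\[
\mathbb{P}\Big(\sum_{i=0}^{l-1}\eta_i\ge\gamma\Big)\le\exp\!\big(-\beta^2 h(\gamma/\beta^2)\big),\qquad h(u)=(1+u)\log(1+u)-u.
\]
Finally I would compare $h$ with the two target exponents by writing $u=\gamma/\beta^2$. In the case $\gamma\le\beta^2$ (so $u\le 1$), since $\log(1+t)\ge t/2$ on $[0,1]$, integration gives $h(u)=\int_0^u\log(1+t)\,\mathrm{d}t\ge u^2/4$, which produces the bound $e^{-\gamma^2/(4\beta^2)}$. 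In the case $\gamma\ge\beta^2$ (so $u\ge 1$), one has $h'(u)=\log(1+u)\ge\log 2>1/4$ and $h(1)=2\log 2-1>1/4$, so $u\mapsto h(u)-u/4$ stays positive, which produces the bound $e^{-\gamma/4}$. Running the same argument for $-\eta_0,\ldots,-\eta_{l-1}$ and adding the two one-sided estimates gives the two-sided inequality, with the factor $2$ coming from the union bound. The only mildly delicate step is the elementary analysis of the function $h$ (its quadratic behaviour near $u=0$ and the positivity of $h(u)-u/4$ for $u\ge 1$); everything else is routine bookkeeping.
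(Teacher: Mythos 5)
Your argument is correct: the exponential-moment bound, the estimate $\mathbb{E}(\eta_i^k)\le\sigma_i^2$ for $k\ge 2$ coming from $|\eta_i|\le 1$, the optimization at $\lambda^\ast=\log(1+\gamma/\beta^2)$ yielding the Bennett exponent $\beta^2 h(\gamma/\beta^2)$, and the two elementary comparisons $h(u)\ge u^2/4$ on $[0,1]$ and $h(u)\ge u/4$ for $u\ge 1$ all check out, and the factor $2$ from the union bound over the two tails is accounted for. Note, however, that the paper does not prove this lemma at all --- it is quoted verbatim from Beck's paper --- so there is nothing to compare against; your derivation is the standard Bernstein--Chernoff argument and serves as a correct self-contained proof of the cited result (the degenerate case $\beta=0$ is trivial, since then all $\eta_i$ vanish almost surely).
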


\begin{theorem} \label{Bound_Stratified}
Let an unnormalized  density function $\psi:[0,1]^{s-1}\to \mathbb{R}_+$, with $s\geq 2$, be given.  Assume that  there exists a  constant $L< \infty$ such that $\psi(\boldsymbol{z})\le L$ for all $\boldsymbol{z}\in [0,1]^{s-1}$. Let $C=\int_{[0,1]^{s-1}}\psi(\boldsymbol{z}) \rd\boldsymbol{z}$ and let
 the graph under $\psi$  be defined as
$$A=\{\boldsymbol{z}\in[0,1]^s:\psi(z_1,\ldots,z_{s-1})\geq L z_s\}.$$
Assume that $\partial A $  admits an $(s-1)-$dimensional Minkowski content $\mathcal{M}_A$.
 Then for  all large enough $N$, with positive probability, Algorithm~\ref{algorithmSAR} yields a  point set $Y_N^{(s-1)}\subseteq [0,1]^{s-1}$
such that
\begin{align*}
D_{N,\psi}^*(Y_N^{(s-1)})\le \frac{s^{\frac{3}{4}}\sqrt{6\mathcal{M}_A}}{(2L)^{\frac{1}{2s}-\frac{1}{2}}C^{\frac{1}{2}-\frac{1}{2s}}}\frac{\sqrt{\log N}}{N^{\frac{1}{2}+\frac{1}{2s}}}+\frac{2C}{LN}.
  \end{align*}
\end{theorem}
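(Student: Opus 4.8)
The plan is to bound the discrepancy of $Y_N^{(s-1)}$ by relating it, via a union bound over a suitable $\delta$-cover, to a large-deviation estimate for the number of accepted points falling into each test box. The key observation is that $Y_N^{(s-1)}$ lands in $[\boldsymbol{0},\boldsymbol{t})$ exactly when the corresponding accepted point in $P_N^{(s)}$ lands in $J_{\boldsymbol{t}}^* = ([\boldsymbol{0},\boldsymbol{t})\times[0,1])\cap A$. Since we want a bound on the $\psi$-weighted star discrepancy, the target ``volume'' of $J_{\boldsymbol{t}}^*$ under the stratified input is $\lambda(J_{\boldsymbol{t}}^*)/\lambda(A)$, and we have $\lambda(A) = C/L$ (by Fubini, slicing in the last coordinate). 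So I would first write, for fixed $\boldsymbol{t}$,
\begin{align*}
N\,D_{N,\psi}^*\text{-contribution at }\boldsymbol{t} \;=\; \Big|\#\{n: \boldsymbol{z}_n\in J_{\boldsymbol{t}}^*\} - \tfrac{N}{\lambda(A)}\lambda(J_{\boldsymbol{t}}^*)\Big|,
\end{align*}
and then decompose the count over the subcubes $Q_i$: for each $i$, the indicator that the (unique, uniformly distributed) point $\boldsymbol{x}_i$ of $P_M$ lies in $J_{\boldsymbol{t}}^*$ is a Bernoulli random variable with mean $\lambda(Q_i\cap J_{\boldsymbol{t}}^*)/\lambda(Q_i) = M\lambda(Q_i\cap J_{\boldsymbol{t}}^*)$. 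These indicators are independent across $i$. Centering them gives mean-zero variables $\eta_i$ with $|\eta_i|\le 1$, to which Lemma~\ref{Bernstein-Chernoff} applies.

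The crucial point making the rate better than $N^{-1/2}$ is that most of the $\eta_i$ are identically zero: if $Q_i$ is entirely inside $A\cap([\boldsymbol{0},\boldsymbol{t})\times[0,1])$ or entirely outside it, then $\eta_i = 0$ deterministically. The only nonzero contributions come from subcubes straddling the boundary of $J_{\boldsymbol{t}}^*$, and by Lemma~\ref{J_Intersect} the number of $Q_i$ meeting $\partial A$ is at most $3s^{1/2}\mathcal{M}_A M^{1-1/s}$; a similar count handles the face $\{t_j\}\times\cdots$ (the boundary of the box contributes a lower-order $O(M^{1-1/s})$ term, or one folds it into the Minkowski bound). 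Hence the number $l$ of nonzero $\eta_i$ satisfies $l \le c\, s^{1/2}\mathcal{M}_A M^{1-1/s}$, and therefore $\beta^2 = \sum \sigma_i^2 \le l$. Feeding $\gamma = 2\beta\sqrt{\log|\Gamma_\delta|} + (\text{slack})$ into the Gaussian branch of Lemma~\ref{Bernstein-Chernoff} yields, for a single $\boldsymbol{t}$, a deviation bound of order $\sqrt{l\log N}$ with failure probability $o(1/|\Gamma_\delta|)$.

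Next I would invoke Lemma~\ref{delta_coverOptA}: take a $\delta$-cover $\Gamma_\delta$ of the test sets $\mathcal{A}$ with $|\Gamma_\delta|\le(2e)^{s-1}(\delta^{-1}+1)^{s-1}$, choose $\delta$ of order $M^{-1/s}$ (so that $\log|\Gamma_\delta| = O(\log M) = O(\log N)$ and the $\delta$-discretization error $N\delta/\lambda(A)$ is of the same order $M^{1-1/s}$ as the main term), and apply the union bound over $\Gamma_\delta$. With positive probability, simultaneously for all test sets in $\Gamma_\delta$ the centered count is $O(\sqrt{l\log N})$; sandwiching an arbitrary $J_{\boldsymbol{t}}^*$ between cover elements then transfers this to all $\boldsymbol{t}$, at the cost of the $O(N\delta)$ term. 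Dividing by $N$ and converting $M$ to $N$ via Lemma~\ref{BoundN} (which gives $N = M\lambda(A)(1+o(1)) = MC/L\,(1+o(1))$, hence $M^{1-1/s} \asymp (NL/C)^{1-1/s}$ and $M^{-1/s}\cdot\sqrt{l} \asymp \sqrt{\mathcal{M}_A}\,N^{1/2-1/(2s)}(L/C)^{\ldots}$) produces the stated bound, the term $2C/(LN)$ absorbing the lower-order fluctuation of $N$ around $MC/L$ and the rounding in $M^{1/s}$.

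The main obstacle is bookkeeping the exponents and constants so that everything collapses to exactly $s^{3/4}\sqrt{6\mathcal{M}_A}\,(2L)^{1/2-1/(2s)}C^{1/(2s)-1/2}\sqrt{\log N}\,N^{-1/2-1/(2s)}$: one must track the factor $3s^{1/2}\mathcal{M}_A$ from Lemma~\ref{J_Intersect}, the factor $2$ from the two branches/tails in Lemma~\ref{Bernstein-Chernoff}, the $(2e)^{s-1}$ in the cover, and the $\lambda(A) = C/L$ normalization through both the ``expected count'' and the $M\leftrightarrow N$ conversion, all while keeping the choice of $\delta$ and the slack in $\gamma$ consistent. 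A secondary subtlety is that $N$ is itself random, so the deviation inequality must be applied conditionally (or one works on the event from Lemma~\ref{BoundN} that pins $N$ to within $O(M^{1-1/s})$ of $MC/L$) and the final statement phrased as ``with positive probability'' over the joint randomness of the stratified inputs.
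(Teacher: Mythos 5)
Your overall architecture is exactly the paper's: reduce the weighted discrepancy of $Y_N^{(s-1)}$ to the counting error of the stratified points on the sets $J_{\boldsymbol{t}}^*$, observe that only the $l\le 3s^{1/2}\mathcal{M}_A M^{1-1/s}$ subcubes meeting $\partial J_{\boldsymbol{t}}^*$ carry nonzero centered Bernoulli variables, apply Lemma~\ref{Bernstein-Chernoff} to those, and make the bound uniform over $\boldsymbol{t}$ by a union bound over the $\delta$-cover of Lemma~\ref{delta_coverOptA}, finally trading $M$ for $N$ via Lemma~\ref{BoundN}. However, there is one concrete step that fails as written: your choice $\delta\asymp M^{-1/s}$. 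You justify it by saying the discretization error $N\delta/\lambda(A)\asymp M^{1-1/s}$ is ``of the same order as the main term,'' but the main term produced by the large-deviation inequality is $\gamma\asymp\sqrt{l\log N}\asymp M^{1/2-1/(2s)}\sqrt{\log N}$, not $l\asymp M^{1-1/s}$ itself --- the entire point of the Bernstein--Chernoff step is to replace $l$ by $\sqrt{l\log N}$. With $\delta\asymp M^{-1/s}$ the sandwiching cost $N\delta$ dominates, and after dividing by $N$ you would only recover the rate $N^{-1/s}$, i.e.\ the deterministic rate of \cite{ZD2014}, not the claimed $N^{-1/2-1/(2s)}\sqrt{\log N}$. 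The correct choice is $\delta\asymp 1/M$ (as the paper takes): the cover size is then $(2e)^{s-1}(M+1)^{s-1}$, so $\log|\Gamma_\delta|$ is still $O(\log N)$ and the union bound goes through with $\theta=2\sqrt{2s}$, while the discretization cost becomes $1/M\le 2C/(LN)$, which is precisely the second term in the theorem.

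Two smaller points. First, you only invoke the Gaussian branch of Lemma~\ref{Bernstein-Chernoff}; since $\beta^2$ can be small relative to $\gamma$ (e.g.\ when most boundary cells have tiny intersection with $J_{\boldsymbol{t}}^*$), you must also treat the case $\gamma\ge\beta^2$ with the tail $2e^{-\gamma/4}$, which is what the paper's Case~2 does; choosing $\gamma=\theta l^{1/2}(\log N)^{1/2}$ rather than $2\beta\sqrt{\log|\Gamma_\delta|}$ makes both branches work simultaneously. Second, your remark that the faces of the box $[\boldsymbol{0},\boldsymbol{t})\times[0,1]$ also contribute $O(M^{1-1/s})$ straddling cubes is a fair observation (the paper folds this into the application of Lemma~\ref{J_Intersect}), and handling it explicitly would only change the constant, not the rate.
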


\begin{proof}
 Let $J_{\boldsymbol{t}}^*=([\boldsymbol{0},\boldsymbol{t})\times [0,1])\bigcap A $, where $\boldsymbol{t}=(t_1,\ldots,t_{s-1})$. Using the notation from Algorithm~\ref{algorithmSAR},
let $\boldsymbol{y}_n $ be the first $s-1$ coordinates  of $\boldsymbol{z}_n \in A$. For $ n=0,\ldots,N-1$, we have $$\sum_{n=0}^{M-1}1_{J_{\boldsymbol{t}}^*}(\boldsymbol{x}_n)=\sum_{n=0}^{N-1}1_{[\boldsymbol{0},\boldsymbol{t})}(\boldsymbol{y}_n).$$
 Therefore
 \begin{equation}\label{Discre}
   \Big|\displaystyle\frac{1}{N}\sum_{n=0}^{N-1}1_{[\boldsymbol{0},\boldsymbol{t})}({\boldsymbol{y}_n})
    -\displaystyle\frac{1}{C}\int_{[\boldsymbol{0},\boldsymbol{t})}\psi({\boldsymbol{z}})d\boldsymbol{z}\Big|=\Big|\displaystyle\frac{1}{N}\sum_{n=0}^{M-1}1_{J_{\boldsymbol{t}}^*}({\boldsymbol{x}_n})
    - \frac{1}{\lambda(A)} \lambda(J_{\boldsymbol{t}}^*) \Big|.\end{equation}
It is noted that
\begin{eqnarray}\label{DiscreTrans}
\Big|\sum_{n=0}^{M-1}1_{J_{\boldsymbol{t}}^*}({\boldsymbol{x}_n})
    - \frac{N}{\lambda(A)} \lambda(J_{\boldsymbol{t}}^*) \Big| &
\le& \Big| \sum_{n=0}^{M-1} 1_{J_{\boldsymbol{t}}^*}(\boldsymbol{x}_n) - M\lambda(J_{\boldsymbol{t}}^*) \Big| + \Big| \lambda(J_{\boldsymbol{t}}^*) \big(M - \frac{N}{\lambda(A)} \big)\Big| \nonumber\\
& \le &\Big|\sum_{n=0}^{M-1} 1_{J_{\boldsymbol{t}}^*}(\boldsymbol{x}_n) - M\lambda(J_{\boldsymbol{t}}^*) \Big| + \Big| M\lambda(A) - N \Big|\nonumber\\
 & \le &\Big|\sum_{n=0}^{M-1} 1_{J_{\boldsymbol{t}}^*}(\boldsymbol{x}_n) - M\lambda(J_{\boldsymbol{t}}^*) \Big| + \Big| M\lambda(A) - \sum_{n=0}^{M-1} 1_{A}(\boldsymbol{x}_n) \Big|\nonumber\\
 & \le & 2 \sup_{\boldsymbol{t}\in[0,1]^s}\Big|\sum_{n=0}^{M-1} 1_{J_{\boldsymbol{t}}^*}(\boldsymbol{x}_n) - M\lambda(J_{\boldsymbol{t}}^*) \Big|.
\end{eqnarray}
Let us associate with each $Q_i$,  random points $\boldsymbol{x}_i\in Q_i$ with probability distribution
\begin{equation*}
\mathbb{P}(\boldsymbol{x}_i\ \in V)=\frac{\lambda(V)}{\lambda(Q_i)}=M\lambda(V),
\end{equation*}
for all measurable sets $V\subseteq Q_i$. 

 It  follows from Lemma~\ref{J_Intersect}  that $J_{\boldsymbol{t}}^*$ at most intersect $l:=3s^{1/2}\mathcal{M}_AM^{1-1/s}$ sets $Q_i$.
Therefore, $J_{\boldsymbol{t}}^*$ is representable as the disjoint union of sets $Q_i$  entirely contained in $J_{\boldsymbol{t}}^*$ and the union of at most $l$ pieces  which are intersections of some sets $Q_i$  and $J_{\boldsymbol{t}}^*$, i.e.
\begin{equation*}
J_{\boldsymbol{t}}^*=\bigcup\limits_{i\in I}Q_i \cup \bigcup\limits_{i\in J}(Q_i \cap J_{\boldsymbol{t}}^*),
\end{equation*}where the index-set $J$ has cardinality at most $\lceil3s^{1/2}\mathcal{M}_AM^{1-1/s}\rceil$. Since for every $Q_i$, $\lambda(Q_i)=1/M$ and $Q_i$ contains exactly one element of $\{\boldsymbol{z}_1,\ldots,\boldsymbol{z}_N\}$,
the discrepancy  of $\bigcup_{i\in I}Q_i$ is zero. Therefore, it remains to investigate the discrepancy of $\bigcup_{i\in J}(Q_i \cap J_{\boldsymbol{t}}^*)$.

Since $\lambda(A)=C/L$ and $N\ge M({C}/{L}-{3s^{1/2}\mathcal{M}_A}{M^{-1/s}})$ by Lemma~\ref{BoundN}, we have $M\le 2LN/C$  for all $M>(6Ls^{1/2}\mathcal{M}_A/C)^s$.
Consequently,
$$l=3s^{1/2}\mathcal{M}_AM^{1-1/s}\le 3s^{1/2}(2L)^{1-1/s}C^{1/s-1}\mathcal{M}_A N^{1-\frac{1}{s}}=\Omega N^{1-1/s},$$
where  $\Omega=3s^{1/2}(2L)^{1-1/s}C^{1/s-1}\mathcal{M}_A$.

Let us define the random variable $\chi_i$ for $0\le i\le l-1$ as follows
\begin{equation*}\chi_i=\left\{\begin{array}{ll}
1,&  \mbox{if }\boldsymbol{z}_i\in Q_i \cap J_{\boldsymbol{t}}^*,\\
0,& \mbox{if }\boldsymbol{z}_i\notin Q_i \cap J_{\boldsymbol{t}}^*.\\
\end{array}
\right.\end{equation*}
By definition,
\begin{align}\label{Discrepancyre}
   \Big|\sum_{n=0}^{M-1} 1_{J_{\boldsymbol{t}}^*}(\boldsymbol{x}_n) - M\lambda(J_{\boldsymbol{t}}^*) \Big|
                               =\Big|\sum_{i=0}^{l-1}\chi_i-M\sum_{i=0}^{l-1}\lambda(Q_i \cap J_{\boldsymbol{t}}^*)\Big|.
\end{align}
Because of  $\mathbb{P}(\chi_i=1)=\lambda(Q_i \cap J_{\boldsymbol{t}}^*)/\lambda(Q_i)=M\lambda(Q_i\cap J_{\boldsymbol{t}}^*)$, we have
\begin{equation}\label{Expectation}
    \mathbb{E}\chi_i=M\lambda(Q_i \cap J_{\boldsymbol{t}}^*),
\end{equation}
where $\mathbb{E}(\cdot)$ denotes the expected value.
By~\eqref{Discrepancyre} and \eqref{Expectation},
\begin{equation}
    \Delta_N(J_t^*;\boldsymbol{z}_1,\ldots,\boldsymbol{z}_N)=\Big|\sum_{n=0}^{M-1} 1_{J_{\boldsymbol{t}}^*}(\boldsymbol{x}_n) - M\lambda(J_{\boldsymbol{t}}^*) \Big|=\Big| \sum_{i=0}^{l-1}(\chi_i-\mathbb{E}\chi_i)\Big|.
\end{equation}\label{discrExpect}

Since the random variables $\chi_i$ for $0\le i\le l-1$ are independent of each other, in order to estimate the sum $\sum_{i=0}^{l-1}(\chi_i-\mathbb{E}\chi_i)$ we are able to apply the classical Bernstein-Chernoff inequality of large deviation type.
Let $\sigma_i^2=\mathbb{E}(\chi_i-\mathbb{E}\chi_i)^2$ and set $\beta=(\sum_{i=1}^{l}\sigma_i^2)^{1/2}$. Let
\begin{equation*}
  \gamma=\theta l^{1/2}(\log N)^{1/2},
\end{equation*}
where $\theta$ is a constant  depending only on  the dimension $s$ which will be fixed later. Without loss of generality, assume that $N\ge 3$.

\textbf{Case 1:} If $\gamma\le \beta^2$,
 since $\beta^2\le l\le \Omega N^{1-\frac{1}{s}}$, by Lemma~\ref{Bernstein-Chernoff} we obtain
\begin{align}\label{Prob_expect1}
    &\mathbb{P}\left(\Delta_N(J_t^*;\boldsymbol{z}_1,\ldots,\boldsymbol{z}_N)\ge \theta l^{1/2}(\log N)^{1/2}\right)\nonumber \\
    &=\mathbb{P}\Big(\big| \sum_{i=1}^{l}(\chi_i-\mathbb{E}\chi_i)\big|\ge \gamma\Big) \le 2e^{-{\gamma^2}/(4\beta^2)}\le 2N^{-{\theta^2}/{4}}.
\end{align}
Though the class of axis-parallel boxes  is uncountable, it suffices to consider a small subclass. Based on the argument in Lemma~\ref{delta_coverOptA},
 there is a $1/M$-cover  of cardinality $(2e)^{s-1}(M+1)^{s-1}\le(2e)^{s-1}({2LN}/{C}+1)^{s-1}$ for $M>M_0$ such that  there exist $R_1,R_2\in \Gamma_{1/M}$ having the properties $R_1\subset J_{\boldsymbol{t}}^* \subset R_2$  and $\lambda(R_2\setminus R_1)\le 1/M$. From this it follows that
\begin{equation*}
     \Delta_N(J_{\boldsymbol{t}}^*;\boldsymbol{z}_1,\ldots,\boldsymbol{z}_N)\le \max_{i=1,2}\Delta(R_i;\boldsymbol{z}_1,\ldots,\boldsymbol{z}_N)+1,
\end{equation*} see, for instance, \cite[Lemma~3.1]{DG2005} and \cite[Section~2.1]{HNWW2001}.
This means that we can restrict ourselves to the elements of $\Gamma_{1/M}$.

In view of \eqref{Prob_expect1}
\begin{equation*}
 \mathbb{P}\big(\Delta(R_i;\boldsymbol{z}_1,\ldots,\boldsymbol{z}_N)\ge \gamma \big)
 \le |\Gamma_{1/M}|2N^{-\frac{\theta^2}{4}}
  \le 2N^{-\frac{\theta^2}{4}}(2e)^{s-1}\big(\frac{2LN}{C}+1\big)^{s-1}<1,
 \end{equation*}
 for $\theta= 2\sqrt{2s}$ and $N\ge \frac{8 e}{C} +2$.

 \textbf{Case 2:} On the other hand, if $\gamma\ge \beta^2$, then
  by Lemma~\ref{Bernstein-Chernoff} we obtain
\begin{align}\label{Prob_expect}
   &\mathbb{P}\Big( \Delta(J_t^*;\boldsymbol{z}_1,\ldots,\boldsymbol{z}_N)\ge \theta l^{1/2}(\log N)^{1/2}\Big)\nonumber\\
    &=\mathbb{P}\Big(\big| \sum_{i=1}^{l}(\chi_i-\mathbb{E}\chi_i)\big|\ge \gamma\Big) \le 2e^{-\frac{\theta l^{1/2}(\log N)^{1/2}}{4}}.
\end{align}

 Similarly, using the $1/M$-cover technique above, for $\theta=2\sqrt{2s}$ and sufficiently large $N$  we have
 \begin{align*}
 \mathbb{P}\big(\Delta(R_i;\boldsymbol{z}_1,\ldots,\boldsymbol{z}_N)\ge \gamma \big)
 &\le|\Gamma_{1/M}|2e^{-\frac{\theta l^{1/2}(\log N)^{1/2}}{4}}\\
 & \le 2e^{-\frac{\theta l^{1/2}(\log N)^{1/2}}{4}}(2e)^{s-1}\big(\frac{2LN}{C}+1\big)^{s-1}<1,
 \end{align*} where the last equation is satisfied for all large enough $N$.

By \eqref{Discre} and \eqref{DiscreTrans}, we obtain that, with positive probability, Algorithm~\ref{algorithmSAR} yields a  point set $Y_N^{(s-1)}$ such that
\begin{equation*}
    D_{N,\psi}^*(Y_N^{(s-1)})\le \sqrt{2s}\Omega^{1/2}N^{-\frac{1}{2}-\frac{1}{2s}}(\log N)^{1/2}+1/M.
\end{equation*}

By Lemma~\ref{NandM}, we have $1/M\le 2C/(LN)$  for sufficiently large $N$. Thus the proof of Theorem~\ref{Bound_Stratified} is complete.
\end{proof}
\subsection{Upper  Bound on the $L_q$-discrepancy}
In this section we prove an upper bound on the expected value of the $L_q$-discrepancy for $2\le q\le \infty$.
\begin{theorem}\label{varianceBound}  Let the unnormalized density  function $\psi:[0,1]^{s-1}\to \mathbb{R}_+$ satisfy all the assumptions stated in Theorem~\ref{Bound_Stratified}. Let $Y_N^{(s-1)}$ be the samples generated by the acceptance-rejection sampler using stratified inputs.  Then we have for $2\le q\le \infty$,
\begin{equation*}
 \big( \mathbb{E}[N^{q}L^{q}_{q,N}(Y_N^{(s-1)})]\big)^{1/q}\le (3s^{1/2}\mathcal{M}_A)^{1-1/q}(2LC^{-1})^{(1-1/s)(1-1/q)}N^{(1-1/s)(1-1/q)},
\end{equation*} where $\mathcal{M}_A$ is the $(s-1)-$dimensional Minkowski content and
 the expectation is taken with respect to the stratified inputs.
\end{theorem}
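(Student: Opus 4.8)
The plan is to reuse the reduction and the cell decomposition from the proof of Theorem~\ref{Bound_Stratified}, but to replace the $\delta$-cover plus Bernstein--Chernoff tail estimate by an elementary interpolation between a second moment and a deterministic sup bound. By Fubini it suffices to bound $\int_{[0,1]^{s-1}}\mathbb{E}[\Delta(\boldsymbol{t})^q]\,\rd\boldsymbol{t}$, where, with $J_{\boldsymbol{t}}^*=([\boldsymbol{0},\boldsymbol{t})\times[0,1])\cap A$ as before,
\[
\Delta(\boldsymbol{t})=N\Big|\frac1N\sum_{n=0}^{N-1}1_{[\boldsymbol{0},\boldsymbol{t})}(\boldsymbol{y}_n)-\frac1C\int_{[\boldsymbol{0},\boldsymbol{t})}\psi(\boldsymbol{z})\,\rd\boldsymbol{z}\Big|=\Big|\sum_{n=0}^{M-1}1_{J_{\boldsymbol{t}}^*}(\boldsymbol{x}_n)-\frac{N}{\lambda(A)}\lambda(J_{\boldsymbol{t}}^*)\Big|
\]
by \eqref{Discre}, so that $N^qL^q_{q,N}(Y_N^{(s-1)})=\int_{[0,1]^{s-1}}\Delta(\boldsymbol{t})^q\,\rd\boldsymbol{t}$. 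Writing $N=\sum_{n=0}^{M-1}1_A(\boldsymbol{x}_n)$ and splitting each relevant sum over the stratification cells $Q_i$ --- those fully inside a test set contribute $0$, since each carries exactly one point of mass $1/M$ --- a short manipulation (the same one behind \eqref{DiscreTrans}) shows that $\Delta(\boldsymbol{t})$ is the modulus of a sum of at most $2l$ independent, centred, $[-1,1]$-valued random variables: the straddle contributions $\chi_i-\mathbb{E}\chi_i$ of the cells meeting $\partial J_{\boldsymbol{t}}^*$, together with those of the cells meeting $\partial A$ which absorb the deviation of $N$ from $M\lambda(A)$. Here $l:=3s^{1/2}\mathcal{M}_A M^{1-1/s}$ bounds the number of straddle cells for each test set uniformly, for all $M\ge M_0$, by Lemma~\ref{J_Intersect}.

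Two facts now follow, using only independence. First, deterministically and for every $\boldsymbol{t}$, $\Delta(\boldsymbol{t})\le 2l$, since the sum has at most $2l$ summands of modulus $\le 1$. Second, $\mathbb{E}[\Delta(\boldsymbol{t})^2]$ is the sum of the variances $\sigma_i^2\le\tfrac14$ of those summands, so $\mathbb{E}[\Delta(\boldsymbol{t})^2]\le\tfrac{l}{2}$. I then interpolate between these two: from $\Delta(\boldsymbol{t})^q=\Delta(\boldsymbol{t})^{q-2}\cdot\Delta(\boldsymbol{t})^2$ and the uniform sup bound,
\[
\mathbb{E}[\Delta(\boldsymbol{t})^q]\le\Big(\sup_{\boldsymbol{u}\in[0,1]^{s-1}}\Delta(\boldsymbol{u})\Big)^{q-2}\mathbb{E}[\Delta(\boldsymbol{t})^2]\le (2l)^{q-2}\cdot\tfrac{l}{2}=O\big(l^{\,q-1}\big),
\]
and integrating over $\boldsymbol{t}\in[0,1]^{s-1}$, a set of measure $1$, gives $\mathbb{E}[N^qL^q_{q,N}]=O(l^{\,q-1})$. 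Finally, Lemma~\ref{BoundN} gives $N\ge M\big(C/L-3s^{1/2}\mathcal{M}_A M^{-1/s}\big)$, hence $M\le 2LN/C$ for all large $N$; substituting $l\le 3s^{1/2}\mathcal{M}_A(2LN/C)^{1-1/s}$ and taking $q$-th roots yields a bound of the claimed order $(3s^{1/2}\mathcal{M}_A)^{1-1/q}(2LC^{-1})^{(1-1/s)(1-1/q)}N^{(1-1/s)(1-1/q)}$. The endpoint $q=\infty$ is the same estimate read off at $q=\infty$: $N L_{\infty,N}=N D_N^*\le 2l$ holds deterministically, i.e. the bound with exponent $1-1/q=1$.

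The part that needs care is the reduction in the first paragraph: one must check that Lemma~\ref{J_Intersect} really controls the number of cells meeting $\partial J_{\boldsymbol{t}}^*$ (not merely $\partial A$) by $l$ simultaneously for all $\boldsymbol{t}$ and all $M\ge M_0$, and one must carry out the splitting of $\Delta(\boldsymbol{t})$ into straddle terms while keeping the constants sharp enough to land on exactly the stated prefactor; the fluctuation of $N$ about $M\lambda(A)$, which the argument behind \eqref{DiscreTrans} absorbs, is what has to be handled precisely there. Everything else --- Fubini, the one-line H\"older interpolation $\|X\|_q^q\le\|X\|_\infty^{q-2}\|X\|_2^2$, the variance computation, and the conversion $M\mapsto N$ --- is routine.
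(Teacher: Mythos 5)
Your proof follows essentially the same route as the paper's: reduce via Fubini, observe that only the cells straddling the boundary of the test set contribute, establish a variance bound of order $l$ and a deterministic sup bound of order $l$, and interpolate. Your pointwise estimate $\mathbb{E}[\Delta(\boldsymbol{t})^q]\le\|\Delta\|_\infty^{q-2}\,\mathbb{E}[\Delta(\boldsymbol{t})^2]$ is exactly the log-convexity inequality $\|f\|_q\le\|f\|_2^{2/q}\|f\|_\infty^{1-2/q}$ that the paper invokes, applied on the product of the probability space with $[0,1]^{s-1}$; the final substitution $l=3s^{1/2}\mathcal{M}_A M^{1-1/s}$, $M\le 2LN/C$ is identical. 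Two remarks. First, you are actually \emph{more} careful than the paper about the gap between $M\lambda(J_{\boldsymbol{t}}^*)$ and $\frac{N}{\lambda(A)}\lambda(J_{\boldsymbol{t}}^*)$: the paper's displayed chain silently identifies $N$ times the local discrepancy with $\sum_i\xi_i(\boldsymbol{t})$, whereas you absorb the fluctuation of $N$ about $M\lambda(A)$ via the argument behind \eqref{DiscreTrans}. The price is that your bound comes out as roughly $2^{1-2/q}\,l^{1-1/q}$ rather than $l^{1-1/q}$, so you do not land on the exact stated prefactor once $q>2$ --- a defect you anticipate, and one the paper avoids only by ignoring the same term. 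Second, the claim that $\Delta(\boldsymbol{t})$ is the modulus of a sum of at most $2l$ \emph{jointly independent} centred summands is not quite right: a cell $Q_i$ can straddle both $\partial J_{\boldsymbol{t}}^*$ and $\partial A$, in which case its contribution to the first group and its contribution to the second are dependent. This costs nothing in order of magnitude --- bound the two sums separately, using $\|S_1+cS_2\|_2\le\|S_1\|_2+\|S_2\|_2$ for the second moment and $|S_1|+|S_2|\le 2l$ for the sup --- but it is precisely where your constant bookkeeping needs repair if you want the sharp prefactor.
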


\begin{proof} Let $J_{\boldsymbol{t}}^*=([\boldsymbol{0},\boldsymbol{t})\times [0,1])\bigcap A $, where $\boldsymbol{t}=(t_1,\ldots,t_{s-1})\in[0,1]^{s-1}$.
 Let $$\xi_i(t)=1_{Q_i\cap J_{\boldsymbol{t}}^*}(\boldsymbol{x}_i)-\lambda(Q_i\cap J_{\boldsymbol{t}}^*)/\lambda(Q_i),$$ where $Q_i$ for $0\le i\le M-1$ is the covering of $[0,1]^s$ with $\lambda(Q_i)=1/M$. Then $\mathbb{E}(\xi_i(t))=0$ since we have $\mathbb{E}[1_{Q_i\cap J_{\boldsymbol{t}}^*}(\boldsymbol{x}_i)]=M\lambda(Q_i\cap J_{\boldsymbol{t}}^*)$.
  Hence
 \begin{eqnarray*}
\mathbb{E}[\xi_i^2(t)]&= &\mathbb{E}[(1_{Q_i\cap J_{\boldsymbol{t}}^*}(\boldsymbol{x}_i)-M\lambda(Q_i\cap J_{\boldsymbol{t}}^*))^2]\\
   &=&\mathbb{E}[1_{Q_i\cap J_{\boldsymbol{t}}^*}(\boldsymbol{x}_i)]-2M\lambda(Q_i\cap J_{\boldsymbol{t}}^*)\mathbb{E}[1_{Q_i\cap J_{\boldsymbol{t}}^*}(\boldsymbol{x}_i)]+M^2\lambda^2(Q_i\cap J_{\boldsymbol{t}}^*)\\
   &=&M\lambda(Q_i\cap J_{\boldsymbol{t}}^*)(1-M\lambda(Q_i\cap J_{\boldsymbol{t}}^*))\\
   &\le& M\lambda(Q_i\cap J_{\boldsymbol{t}}^*)\le 1.
 \end{eqnarray*}
 If $Q_i\subseteq J_t^*$ or if $Q_i\cap J_t^*=\emptyset$, we have $\xi_i(t)=0$.
  We order the sets $Q_i$ such that $Q_0, Q_1,\ldots, Q_{i_0}$ satisfy $Q_i\cap J_t^*\neq\emptyset$ and $Q_i\nsubseteq J_t^*$ (i.e. $Q_i$ intersects the boundary of $J_t^*$) and the remaining sets $Q_i$
  either satisfy $Q_i\cap J_t^*=\emptyset$ or $Q_i\subseteq J_t^*$.
  Due to the fact that the density curve $\psi$ at most intersects with $l:=3s^{1/2}\mathcal{M}(\partial A)M^{1-1/s}$ sets $Q_i$, if $\partial A$ admits an $(s-1)-$dimensional Minkowski content, it follows that, for $q=2$,
\begin{eqnarray*}
\big( \mathbb{E}[N^2L_{2,N}^2(Y_N^{(s-1)})]\big)^{1/2}&=& \Big(\mathbb{E}\Big[\int_{[0,1]^s}\big|\sum_{i=0}^{M-1}\xi_{i}(\boldsymbol{t}) \big|^2\D \boldsymbol{t}\Big]\Big)^{1/2}\\
 &=&\Big(\int_{[0,1]^s}\mathbb{E}\big[\sum_{i=0}^{M-1}\xi_{i}(\boldsymbol{t})\big]^2 \D \boldsymbol{t} \Big)^{1/2}\\
 &=&\Big(\int_{[0,1]^s}\sum_{i=0}^{l-1}\mathbb{E}[\xi_i(t)^2]\D \boldsymbol{t} \Big)^{1/2}\le l^{1/2}.
\end{eqnarray*}
Since $|\xi_i(t)|\le 1$, for $q=\infty$, we have
\begin{eqnarray*}
 \sup_{P_M\subset [0,1]^{s}}|ND_N^*(Y_N^{(s-1)})|&=& \sup_{P_M\subset [0,1]^{s}}\sup_{\boldsymbol{t}\in[0,1]^{s-1}} \big|\sum_{i=0}^{M-1}\xi_{i}(\boldsymbol{t}) \big|= \sup_{P_M\subset [0,1]^{s}}\sup_{\boldsymbol{t}\in[0,1]^{s-1}} \big|\sum_{i=0}^{l-1}\xi_{i}(\boldsymbol{t}) \big|\\
            &\le& \sup_{P_M\in [0,1]^{s}}\sup_{\boldsymbol{t}\in[0,1]^{s-1}} \sum_{i=0}^{l-1}\big|\xi_{i}(\boldsymbol{t}) \big|\le  l.
  \end{eqnarray*}
Therefore, for $2\le q\le \infty$,
\begin{equation*}
\big( \mathbb{E}[N^{q}L^{q}_{q,N}(Y_N^{(s-1)})]\big)^{1/q}\le  l^{1-1/q},
\end{equation*}
which is a consequence of the log-convexity of $L_p$-norms, i.e $\| f\|_{p_\theta}\le \| f\|_{p_0}^{1-\theta}\| f\|_{p_1}^{\theta}$, where $1/p_\theta=(1-\theta)/p_0+\theta/p_1$. In our case,
$p_0=2$ and $p_1=\infty$.

Additionally, following from  Lemma~\ref{BoundN}, we have $M\le 2LN/C$ whenever $M>(6Ls^{1/2}\mathcal{M}_A/C)^s$.
Hence we obtain the desired result by substituting $l=3s^{1/2}\mathcal{M}_AM^{1-1/s}$ and replacing $M$ in terms of $N$.
\end{proof}
\begin{remark}It would also be interesting to obtain an upper bound for  $1\le q<2$. See Heinrich \cite{Heinrich2000} for a possible proof technique.
We leave it as an open problem.
\end{remark}
\section{Improved Rate of Convergence for Deterministic Acceptance-Rejection Sampler}
\label{Sec:4}
In this section, we  prove a convergence rate of order $N^{-\alpha}$ for $1/s \le \alpha < 1$, where $\alpha$ depends on the target density $\psi$. See Corollary~\ref{Maincorollary} below for details. For
this result we use $(t,m,s)$-nets (see Definition~\ref{DeftmsNet} below) as inputs instead of stratified samples. The value of $\alpha$ here depends on how well the graph of $\psi$ can be covered by certain rectangles (see Equation~\eqref{G_k}). In practice this covering rate of order $N^{-\alpha}$ is hard to determine precisely, where $\alpha$ can range anywhere from $1/s $ to $ < 1$, where $\alpha$ arbitrarily close to $1$ can be achieved if $\psi$ is constant. We also provide a simple example in dimension $s=2$ for which $\alpha$ can take on the values $\alpha = 1- \ell^{-1}$ for $\ell \in \mathbb{N}$, $\ell \ge 2$. See Example~\ref{Example-Section2} for details.

We first establish some notation and some useful
definitions and then obtain theoretical results.
First we introduce the definition of $(t,m,s)$-nets  in base $b$ (see
\cite{DP10}) which we use as the driver sequence.
The following fundamental definitions of elementary interval and fair
sets are used to define a $(t,m,s)$-net in base $b$.
\begin{definition}\label{Einterval}[{b-adic elementary interval}] Let $b\geq2$ be an
integer. An $s$-di\-mensional $b$-adic elementary interval is an interval
of the form
\begin{equation*}
\prod_{i=1}^s\left[\displaystyle\frac{a_i}{b^{d_i}},\displaystyle\frac
{a_i+1}{b^{d_i}}\right)
\end{equation*}
with integers $0\leq a_i<b^{d_i}$ and $d_i\geq0$ for all $1\leq i\leq
s$. If $d_1,\ldots,d_s$ are such that $d_1+\cdots+d_s=k$, then we say
that the elementary interval is of order $k$.
\end{definition}

\begin{definition}[{fair sets}]For a given set $P_N=\{\boldsymbol
{x}_0,\boldsymbol{x}_1,\ldots,\boldsymbol{x}_{N-1}\}$ consisting of $N$
points in $[0,1)^s$, we say for a subset $J$ of $[0,1)^s$ to be fair
with respect to $P_N$, if
\begin{equation*}
\displaystyle\frac{1}{N}\sum_{n=0}^{N-1}1_J(\boldsymbol{x}_n)=\lambda(J),
\end{equation*}
where $1_J(\boldsymbol{x}_n)$ is the indicator function of the set $J$.
\end{definition}

\begin{definition}[{$(t,m,s)$-nets in base b}]\label{DeftmsNet}
For a given dimension $s\geq1$, an integer base $b\geq2$, a positive
integer $m$ and an integer $t$ with $0\leq t\leq m$,
a point set $Q_{m,s}$ of $b^m$ points in $[0,1)^s$ is called a
$(t,m,s)$-nets in base $b$ if the point set $Q_{m,s}$ is fair with
respect to all b-adic s-dimensional elementary intervals of order at
most $m-t$.
\end{definition}

We present the acceptance-rejection algorithm using $(t,m,s)$-nets as driver sequence.
\begin{algorithm}\label{algorithmDAR}  Let the target density $\psi:[0,1]^{s-1}\to \mathbb{R}_+$, where $s\geq 2$, be given. Assume that there exists a constant $L<\infty$ such that $\psi(\boldsymbol{x}) \leq L$ for  all $\boldsymbol{x}\in[0,1]^{s-1}$. Let $A=\{\boldsymbol{z}\in[0,1]^s:\psi(z_1,\ldots,z_{s-1})\geq L x_s\}$. Suppose we aim to obtain approximately $N$ samples from $\psi$.
\begin{itemize}
  \item [i)~] Let $M=b^m\ge\left \lceil N/(\int_{[0,1]^{s-1}}\psi(\boldsymbol{x})/L d\boldsymbol{x})\right\rceil$, where $m\in\mathbb{N}$ is the smallest integer satisfying this inequality.  Generate a $(t,m,s)$-net $Q_{m,s}=\{\boldsymbol{x}_0,\boldsymbol{x}_1,\ldots,\boldsymbol{x}_{b^m-1}\}$ in base $b$.
  \item [ii)~]Use the acceptance-rejection method for the points $Q_{m,s}$ with respect to the density $\psi$, i.e. we accept the point $\boldsymbol{x}_n$ if $\boldsymbol{x}_n \in A$, otherwise reject. Let $P_N^{(s)}=A\cap Q_{m,s}=\{\boldsymbol{z}_0,\ldots, \boldsymbol{z}_{N-1}\}$ be the sample set we accept.
  \item [iii)~]  Project the points  $P_N^{(s)}$ onto  the first $(s-1)$ coordinates.
  Let $Y_N^{(s-1)}=\{\boldsymbol{y}_0,\ldots, \boldsymbol{y}_{N-1}\}\subseteq [0,1]^{s-1}$ be the projections of the points  $P_N^{(s)}$.
 \item [iv)~] Return the point set $Y_N^{(s-1)}$.
\end{itemize}
\end{algorithm}

In the following we show that an improvement of the discrepancy bound for the deterministic acceptance-rejection sampler  is possible. Let an unnormalized  density function $\psi:[0,1]^{s-1}\to \mathbb{R}_+$, with $s\geq 2$, be given. Let again
$$A=\{\boldsymbol{z}=(z_1,\ldots,z_{s})\in[0,1]^s:\psi(z_1,\ldots,z_{s-1})\geq L z_s\}$$ and $J_{\boldsymbol{t}}^*=([\boldsymbol{0},\boldsymbol{t})\times [0,1])\bigcap A $. Let $\partial J^*_{\bst}$ denote the boundary of $J^*_{\bst}$ and  $\partial [0,1]^s$ denotes the boundary of $[0,1]^s$. For $k\in \mathbb{N}$  we define the covering number
\begin{align}\label{G_k}
\Gamma_k(\psi) = \sup_{\bst \in [0,1]^s} \min \{v: & \exists U_{1},\ldots, U_v \in \mathcal{E}_k:  ( \partial J^*_{\bst} \setminus \partial [0,1]^s ) \subseteq \bigcup_{i=1}^v U_i, \nonumber \\ &  U_i \cap U_{i'} = \emptyset \mbox{ for } 1 \le i < i' \le v  \},
\end{align}
where $\mathcal{E}_k$ is the family of elementary intervals of order $k$.
\begin{lemma} \label{Mainlemma}
Let $\psi:[0,1]^{s-1} \to [0,1]$ be an unnormalized target density and let the covering number $\Gamma_{m-t}(\psi)$ be given by \eqref{G_k}. Then the  discrepancy of the point set $Y_N^{(s-1)}=\{\boldsymbol{y}_0,\boldsymbol{y}_1,\ldots, \boldsymbol{y}_{N-1}\} \subseteq [0,1]^{s-1}$ generated by
Algorithm~\ref{algorithmDAR} using a $(t,m,s)$-net in base $b$, for large enough $N$, satisfies
\begin{equation*}
   D_{N,\psi}^*(Y_N^{(s-1)})
    \le 4 C^{-1} b^t \Gamma_{m-t}(\psi) N^{-1},
  \end{equation*}
  where $C=\int_{[0,1]^{s-1}}\psi({\boldsymbol{z}})d\boldsymbol{z}$.
\end{lemma}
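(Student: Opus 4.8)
The plan is to follow the same strategy as in the star-discrepancy proof of Theorem~\ref{Bound_Stratified}, but to replace the probabilistic (Bernstein--Chernoff) estimate on the ``boundary'' cells by a deterministic counting argument that exploits the net property. Fix $\boldsymbol{t}\in[0,1]^{s-1}$ and recall that, writing $J_{\boldsymbol{t}}^* = ([\boldsymbol{0},\boldsymbol{t})\times[0,1])\cap A$, one has the identity $\sum_{n=0}^{b^m-1} 1_{J_{\boldsymbol{t}}^*}(\boldsymbol{x}_n) = \sum_{n=0}^{N-1} 1_{[\boldsymbol{0},\boldsymbol{t})}(\boldsymbol{y}_n)$, so that exactly as in \eqref{Discre} the quantity we must bound is
\begin{equation*}
  N\, D_{N,\psi}^*(Y_N^{(s-1)}) = \sup_{\boldsymbol{t}} \Big| \sum_{n=0}^{N-1} 1_{[\boldsymbol{0},\boldsymbol{t})}(\boldsymbol{y}_n) - \frac{N}{C}\int_{[\boldsymbol{0},\boldsymbol{t})}\psi(\boldsymbol{z})\,\D\boldsymbol{z}\Big|,
\end{equation*}
which via the triangle-inequality manipulation of \eqref{DiscreTrans} is controlled by $2\sup_{\boldsymbol{t}} |\sum_{n=0}^{b^m-1} 1_{J_{\boldsymbol{t}}^*}(\boldsymbol{x}_n) - b^m\lambda(J_{\boldsymbol{t}}^*)|$ (with $\lambda(A)=C/L$ replacing the role it played before, and using $b^m = M$).

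Next I would decompose $J_{\boldsymbol{t}}^*$ using the elementary intervals of order $m-t$ rather than the stratification cubes. By definition of $\Gamma_{m-t}(\psi)$ in \eqref{G_k}, the part of $\partial J_{\boldsymbol{t}}^*$ not lying on $\partial[0,1]^s$ can be covered by at most $\Gamma_{m-t}(\psi)$ pairwise disjoint elementary intervals $U_1,\dots,U_v \in \mathcal{E}_{m-t}$, $v \le \Gamma_{m-t}(\psi)$. Every elementary interval of order $m-t$ that is \emph{not} among the $U_i$ is therefore either entirely contained in $J_{\boldsymbol{t}}^*$ or disjoint from it; since the point set is a $(t,m,s)$-net, each such elementary interval is fair, so its contribution to $\sum 1_{J_{\boldsymbol{t}}^*}(\boldsymbol{x}_n) - b^m\lambda(J_{\boldsymbol{t}}^*)$ vanishes. (One has to partition $[0,1)^s$ into elementary intervals of order exactly $m-t$ of equal volume $b^{-(m-t)}$ and observe that the $U_i$, being of order $m-t$, are members of such a partition, so the ``bad'' region is covered by at most $v$ of them.) Hence
\begin{equation*}
  \Big| \sum_{n=0}^{b^m-1} 1_{J_{\boldsymbol{t}}^*}(\boldsymbol{x}_n) - b^m\lambda(J_{\boldsymbol{t}}^*)\Big| \le \sum_{i=1}^v \Big( \#\{n: \boldsymbol{x}_n \in U_i\} + b^m \lambda(U_i) \Big).
\end{equation*}
Each $U_i$ has volume $b^{-(m-t)} = b^t/b^m$, so $b^m\lambda(U_i) = b^t$; and since an elementary interval of order $m-t$ is fair, $\#\{n:\boldsymbol{x}_n\in U_i\} = b^m\lambda(U_i) = b^t$ as well. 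Therefore the sum over the $v \le \Gamma_{m-t}(\psi)$ intervals is at most $2 b^t \Gamma_{m-t}(\psi)$, giving $N D_{N,\psi}^*(Y_N^{(s-1)}) \le 4 b^t \Gamma_{m-t}(\psi)$.

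Finally I would translate $b^m$ back into $N$. By the choice of $M=b^m$ in Algorithm~\ref{algorithmDAR} and the same accepted-count bookkeeping used in Lemma~\ref{BoundN} (the number accepted is $N = \#\{n:\boldsymbol{x}_n\in A\}$ and $b^m\lambda(A) = b^m C/L$), for large enough $N$ one has $b^m \le 2LN/C$ — equivalently $C^{-1}$ absorbs the constant — and dividing through by $N$ yields
\begin{equation*}
  D_{N,\psi}^*(Y_N^{(s-1)}) \le 4 C^{-1} b^t \Gamma_{m-t}(\psi) N^{-1},
\end{equation*}
as claimed. The main obstacle I anticipate is the bookkeeping in the second step: making precise that the disjointness of the covering intervals $U_i \in \mathcal{E}_{m-t}$ together with the net's fairness on \emph{all} elementary intervals of order $\le m-t$ really does kill every contribution except those of the $\le \Gamma_{m-t}(\psi)$ boundary intervals — in particular handling the $\delta$-cover/boundary-on-$\partial[0,1]^s$ subtlety in \eqref{G_k}, and confirming that replacing $\partial J_{\boldsymbol{t}}^*$ by the union of closed elementary intervals of order $m-t$ correctly separates the ``fully inside'' cells from the ``straddling'' cells. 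The deviation estimates themselves are then immediate since everything is exact for a net, with no probabilistic tail bound needed.
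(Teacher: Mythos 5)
Your argument is essentially the paper's proof: the paper sandwiches $J_{\bst}^*$ between two fair sets $W^o=\bigcup_i V_i$ and $\overline{W}=\bigcup_i V_i\cup\bigcup_i U_i$ built from disjoint order-$(m-t)$ elementary intervals, which is exactly your ``interior cells cancel by fairness, each boundary cell $U_i$ contributes at most $2b^t$'' decomposition (the sandwich form even saves your extra factor of $2$, giving $vb^t$ rather than $2vb^t$, though both fit under the stated constant because $C=\int\psi\le 1$). The completion step you flag as the main obstacle --- extending the $U_i$ to a disjoint family of order-$(m-t)$ elementary intervals exhausting $J_{\bst}^*$ --- is asserted in the paper in the same unproved form (the existence of the $V_i$), so you match its level of rigor there; note only that your final detour through $b^m\le 2LN/C$ is unnecessary, since the triangle-inequality step already produces a bound on $N\,D_{N,\psi}^*$ directly.
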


\begin{proof}
Let $\bst \in [0,1]^s$ be given. Let $v = \Gamma_{m-t}(\psi)$ and $U_1, \ldots, U_v$ be elementary intervals of order $m-t$ such that $U_1 \cup U_2 \cup \cdots \cup U_v \supseteq (\partial J^*_{\bst} \setminus \partial [0,1]^s)$ and $U_i \cap U_{i'} = \emptyset$ for $1 \le i < i' \le v$. Let $V_1, \ldots, V_z \in \mathcal{E}_{m-t}$ with $V_i \subseteq J^*_{\boldsymbol{t}}$, $V_i \cap V_{i'} = \emptyset$ for all $1 \le i < i' \le z$ and $V_i \cap U_i = \emptyset$ such that $\bigcup_{i=1}^z V_i \cup \bigcup_{i=1}^v U_i \supseteq J^*_{\bst}$. We define
\begin{equation*}
\overline{W} = \bigcup_{i=1}^z V_i \cup \bigcup_{i=1}^v U_i
\end{equation*}
and
\begin{equation*}
W^o = \bigcup_{i=1}^z V_i.
\end{equation*}
Then $\overline{W}$ and $W^o$ are fair with respect to the $(t,m,s)$-net, $W^o \subseteq J^*_{\boldsymbol{t}} \subseteq \overline{W}$ and
\begin{equation*}
\lambda(\overline{W} \setminus J^*_{\boldsymbol{t}}), \lambda(J^*_{\boldsymbol{t}} \setminus W^o) \le \lambda(\overline{W} \setminus W^o) = \sum_{i=1}^v \lambda(U_i) = \sum_{i=1}^v b^{-m+t} = b^{-m+t} \Gamma_{m-t}(\psi).
\end{equation*}
The proof of the result now follows by the same arguments as the proofs in \cite[Lemma~1\&Theorem~1]{ZD2014}.
\end{proof}

 From Lemma~\ref{J_Intersect} we have that if $\partial A$ admits an $(s-1)-$dimensional Minkowski content, then
  \begin{equation*}
  \Gamma_{k}(\psi) \le c_s b^{(1-1/s)k}.
\end{equation*}
 This yields a convergence rate of order $N^{-1/s}$ in Lemma~\ref{Mainlemma}. Another known example is the following. Assume that $\psi$ is  constant. Since the graph of $\psi$ can be covered by just one elementary interval of order $m-t$, this is the simplest possible case. The results from \cite[Section~3]{Niederriter1987} (see also \cite[p. 184--190]{DP10} for an exposition in dimensions $s=1,2,3$) imply that $\Gamma_{k}(\psi) \le C_s k^{s-1}$ for some constant $C_s$ which depends only on $s$. This yields the convergence rate of order $(\log N)^{s-1} N^{-1}$ in Lemma~\ref{Mainlemma}. Thus, in general, there are constants $c_{s, \psi}$ and $C_{s, \psi}$ depending only on $s$ and $\psi$ such that
\begin{equation}\label{eq_Gamma_psi}
c_{s,\psi} k^{s-1} \le \Gamma_{k}(\psi) \le C_{s, \psi} b^{(1-1/s) k},
\end{equation}
whenever the set $\partial A$ admits an $(s-1)-$dimensional Minkowski content. This yields a convergence rate in Lemma~\ref{Mainlemma} of order $N^{-\alpha}$ with $1/s \le \alpha < 1$, where the precise value of $\alpha$ depends on $\psi$. We obtain the following corollary.

\begin{corollary} \label{Maincorollary}
Let $\psi:[0,1]^{s-1} \to [0,1]$ be an unnormalized target density and let $\Gamma_{k}(\psi)$ be given by \eqref{G_k}. Assume that there is a constant $\Theta > 0$ such that
\begin{equation*}
\Gamma_{k}(\psi) \le \Theta b^{(1-\alpha) k} k^{\beta} \quad\mbox{for all } k \in \mathbb{N},
\end{equation*}
for some $1/s \le \alpha \le 1$ and $\beta \ge 0$. Then there is a constant $\Delta_{s,t,\psi} > 0$ which depends only on $s,t$ and $\psi$, such that the  discrepancy of the point set $Y_N^{(s-1)}=\{\boldsymbol{y}_0,\boldsymbol{y}_1,\ldots, \boldsymbol{y}_{N-1}\} \subseteq [0,1]^{s-1}$ generated by
Algorithm \ref{algorithmDAR} using a $(t,m,s)$-net in base $b$, for large enough $N$, satisfies
\begin{equation*}
  D_{N,\psi}^*(Y_N^{(s-1)})
    \le \Delta_{s,t,\psi} N^{-\alpha} (\log N)^{\beta}.
  \end{equation*}
\end{corollary}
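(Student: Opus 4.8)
The plan is to combine Lemma~\ref{Mainlemma} with the hypothesis on $\Gamma_k(\psi)$ and with the relationship between $N$, $M = b^m$, and $C$ dictated by Algorithm~\ref{algorithmDAR}. First I would recall from Lemma~\ref{Mainlemma} the bound
\begin{equation*}
D_{N,\psi}^*(Y_N^{(s-1)}) \le 4 C^{-1} b^t \Gamma_{m-t}(\psi) N^{-1},
\end{equation*}
and substitute the hypothesis $\Gamma_{m-t}(\psi) \le \Theta b^{(1-\alpha)(m-t)} (m-t)^\beta \le \Theta b^{(1-\alpha)m} m^\beta$ (absorbing the factor $b^{-(1-\alpha)t} \le 1$ and bounding $(m-t)^\beta \le m^\beta$). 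This gives
\begin{equation*}
D_{N,\psi}^*(Y_N^{(s-1)}) \le 4 \Theta C^{-1} b^t\, b^{(1-\alpha)m}\, m^\beta\, N^{-1}.
\end{equation*}

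Next I would express $b^m = M$ in terms of $N$. By the choice in step~i) of Algorithm~\ref{algorithmDAR}, $m$ is the smallest integer with $b^m \ge \lceil N L / C\rceil$, so $N L/C \le b^m < b\cdot N L/C$ (using $b^{m-1} < \lceil NL/C\rceil \le NL/C$, so $b^m < bNL/C$; one checks $\lceil x \rceil \le x+1$ and absorbs the $+1$ into the constant for large $N$). Hence $b^{(1-\alpha)m} \le (bNL/C)^{1-\alpha}$, which contributes a factor of order $N^{1-\alpha}$ times a constant depending on $b, L, C, \alpha$. Combining with the $N^{-1}$ already present yields $N^{-\alpha}$ up to constants. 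For the logarithmic factor, $b^m < bNL/C$ gives $m < 1 + \log_b(NL/C) = O(\log N)$, so $m^\beta = O((\log N)^\beta)$ with the implied constant depending on $b, L, C, \beta$. Collecting all constant factors into a single $\Delta_{s,t,\psi}$ (noting the dependence on $b$ and $L$ can be subsumed since $b$ and $L$ are determined by the setup of $\psi$ and the chosen net) gives
\begin{equation*}
D_{N,\psi}^*(Y_N^{(s-1)}) \le \Delta_{s,t,\psi} N^{-\alpha} (\log N)^\beta
\end{equation*}
for all large enough $N$, as claimed.

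The argument is essentially a bookkeeping exercise once Lemma~\ref{Mainlemma} is in hand; the only mildly delicate points are (i) making sure the two-sided estimate $NL/C \le b^m < b(NL/C) + b$ is handled cleanly, in particular that the ceiling and the "$+1$" slack are absorbed into the constant only for $N$ large, and (ii) verifying that $(m-t)^\beta$ can indeed be replaced by a constant times $(\log N)^\beta$ uniformly. I do not expect any genuine obstacle; the substance of the result is entirely contained in Lemma~\ref{Mainlemma} and its proof (which in turn rests on the fair-set property of $(t,m,s)$-nets and the $\delta$-cover technique of \cite{ZD2014}), and this corollary is simply its translation into an asymptotic rate under the assumed growth bound on $\Gamma_k(\psi)$.
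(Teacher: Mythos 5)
Your proposal is correct and matches the paper's (implicit) derivation: the corollary is obtained exactly by substituting the assumed bound on $\Gamma_{m-t}(\psi)$ into Lemma~\ref{Mainlemma} and using $b^m = O(N)$, $m = O(\log N)$ from step~i) of Algorithm~\ref{algorithmDAR}. The only cosmetic caveat is the paper's own conflation of the target $N$ in step~i) with the accepted sample count $N$; your bookkeeping handles this at the same level of rigor as the paper does.
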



\begin{example}\label{Example-Section2}
To illustrate the bound in Corollary~\ref{Maincorollary}, we consider now an example for which we can obtain an explicit bound on $\Gamma_k(\psi)$ of order $b^{k(1-\alpha)}$ for $1/2 \le \alpha < 1$. For simplicity let $s=2$ and $\alpha = 1- \ell^{-1}$ for some $\ell \in \mathbb{N}$ with $\ell \ge 2$. We define now a function $\psi_\ell: [0,1] \to [0,1]$ in the following way: let $x \in [0,1)$ have $b$-adic expansion
\begin{equation*}
x = \frac{\xi_1}{b} + \frac{\xi_2}{b^2} + \frac{\xi_3}{b^3} + \cdots
\end{equation*}
where $\xi_i \in \{0,1,\ldots, b-1\}$ and  assume that infinitely many of the $\xi_i$ are different from $b-1$. Then set
\begin{equation*}
\psi_\ell(x) = \frac{\xi_1}{b^{l-1}} + \frac{\xi_2}{b^{2(l-1)}} + \frac{\xi_3}{b^{3(l-1)}} + \cdots.
\end{equation*}
Let $t \in [0,1)$. In the following we define elementary intervals of order $k \in \mathbb{N}$ which cover $\partial J_t^* \setminus \partial [0,1]^2$. Assume first that $k$ is a multiple of $\ell$, then let $g = k/\ell$. Then we define the following elementary intervals of order $k = g \ell$:
\begin{align}\label{cover_int}
& \left[\frac{a_1}{b} + \cdots + \frac{a_{g-1}}{b^{g-1}} + \frac{a_g}{b^g}, \frac{a_1}{b} + \cdots + \frac{a_{g-1}}{b^{g-1} } + \frac{a_g + 1}{b^g} \right) \times \nonumber \\ & \left[\frac{a_1}{b^{\ell-1}} + \cdots + \frac{a_{g-1} }{b^{(g-1) (\ell-1)} } + \frac{a_g}{b^{g(\ell-1)} }, \frac{a_1}{b^{\ell-1}} + \cdots + \frac{a_{g-1}}{b^{(g-1) (\ell-1)} } + \frac{a_g+1}{b^{g(\ell-1)} }\right),
\end{align}
where $a_1, \ldots,  a_g \in \{0,1, \ldots, b-1\}$ run through all possible choices such that
\begin{equation*}
 \frac{a_1}{b} + \cdots + \frac{a_{g-1}}{b^{g-1} } + \frac{a_g + 1}{b^g} \le t.
\end{equation*}
The number of these choices for $a_1, \ldots, a_g$ is bounded by $b^g$. Let
\begin{equation*}
t = \frac{t_1}{b} + \cdots + \frac{t_g}{b^g} + \frac{t_{g+1}}{b^{g+1}} + \cdots.
\end{equation*}
For integers $1 \le u \le g(\ell-1)$ and $0 \le c_u < t_{g+u}$, we define the intervals
\begin{align}\label{cover_int2}
& \left[\frac{t_1}{b} + \cdots + \frac{t_{g+ u-1}}{b^{g+u-1}} + \frac{c_u}{b^{g+u}}, \frac{t_1}{b} + \cdots + \frac{t_{g+u-1}}{b^{g+u-1} } + \frac{c_u + 1}{b^{g+u}} \right) \times \nonumber \\ & \left[\frac{d_1}{b} + \cdots + \frac{d_{g(\ell-1)-u}}{b^{g(\ell-1)-u}}, \frac{d_1}{b} + \cdots + \frac{d_{g(\ell-1)-u} }{b^{g(\ell-1)-u}}  + \frac{1}{b^{g(\ell-1)-u}} \right),
\end{align}
where $d_i = 0$ if $\ell \nmid i$, $d_i = t_{i/\ell}$ if $\ell | i$ and we set $\frac{d_1}{b} + \cdots + \frac{d_{g(\ell-1)-u}}{b^{g(\ell-1)-u}} = 0$ if $u = g(\ell-1)$. Further we define the interval
\begin{equation}\label{cover_int3}
\left[\frac{t_1}{b} + \cdots + \frac{t_{g \ell}}{b^{g \ell}}, \frac{t_1}{b} + \cdots + \frac{t_{g\ell}}{b^{g\ell} } + \frac{1}{b^{g \ell}} \right) \times [0,1).
\end{equation}
The intervals defined in \eqref{cover_int}, \eqref{cover_int2} and \eqref{cover_int3} cover $\partial J^*_t \setminus \partial [0,1]^2$. Thus we have
\begin{equation*}
\Gamma_{g \ell}(\psi_\ell) \le b^{g} + b g(\ell-1) + 1 \le \ell b^g.
\end{equation*}
For arbitrary $k \in \mathbb{N}$ we can use elementary intervals of order $k$ which cover the same area as the intervals \eqref{cover_int}, \eqref{cover_int2} and \eqref{cover_int3}. Thus we have at most $b^{\ell - 1}$ times as many intervals and we therefore obtain
\begin{equation*}
\Gamma_k(\psi_\ell) \le \ell b^{k/\ell + \ell-1}.
\end{equation*}
Thus we obtain
\begin{equation*}
   \sup_{\boldsymbol{t}\in[0,1]} \left|\frac{1}{N}\sum_{n=0}^{N-1}1_{[0, t)}({y_n})
    -  \frac{1}{C}\int_0^t \psi_\ell(z) \D z \right|
    \le \Delta_{s,t,\psi} N^{-(1-\frac{1}{\ell}) }.
  \end{equation*}
\end{example}

\begin{remark} In order to obtain similar results as in this section for stratified inputs rather than $(t,m,s)-$nets, one would have to use the elementary intervals $U_1,\ldots,U_v$ of order $k$ which yield  a covering of $\partial J_t^*\setminus \partial [0,1]^s$ for all $\boldsymbol{t}\in [0,1]^{s-1}$. From this covering one would then have to construct a covering of $\partial A\setminus \partial [0,1]^s$ and use this covering to obtain stratified  inputs. Since such a covering is not easily available  in general, we did not pursue this approach further.
\end{remark}

%
\begin{acknowledgement}
\end{acknowledgement}
H. Zhu was supported by a PhD scholarship from the
University of New South Wales. J.~Dick was supported by a Queen
Elizabeth 2 Fellowship from the Australian Research Council.


\begin{thebibliography}{99}

\bibitem{ACV2008}
L.~Ambrosio, A.~Colesanti and E.~Villa.
\newblock Outer Minkowski content for some classes of closed sets.
\newblock {\em Mathematische Annalen}, 342, 727--748, 2008.

\bibitem{Beck1984}
J.~Beck.
\newblock Some upper bounds in the theory of irregularities of distribution.
\newblock {\em Acta Arithmetica}, 43, 115--130, 1984.

\bibitem{BHL2013}
C.~Botts, W.~H\"ormann and J.~ Leydold.
\newblock Transformed density rejection with inflection points.
\newblock{\em  Statatistics and Computing}, 23, 251-260, 2013.

%
\bibitem{ChenThesis2011}
S. Chen.
\newblock {\em Consistency and convergence rate of Markov  chain quasi Monte Carlo with examples}.
\newblock PhD thesis, Stanford University, 2011.

\bibitem{CDO2011}
S.~Chen, J.~Dick and A.B.~Owen.
\newblock Consistency of Markov chain quasi-Monte Carlo on continuous state spaces.
\newblock{\em Annals of Statistics}, 39, 673--701, 2011.


\bibitem{Devroye1984}
L.~Devroye.
\newblock A simple algorithm  for generating random variates with a log-concave density.
\newblock {\em Computing}, 33, 247-257, 1984.


\bibitem{Devroye1986}
L.~Devroye.
\newblock {\em Nonuniform Random Variate Generation}.
\newblock Springer-Verlag, New York, 1986.


\bibitem{DP10}
J.~Dick and F.~Pillichshammer.
\newblock {\em Digital Nets and Sequences: Discrepancy Theory and Quasi-Monte
  Carlo Integration}.
\newblock Cambridge University Press, 2010.

\bibitem{DR2013}
J.~Dick and D.~Rudolf.
\newblock{Discrepancy estimates for variance bounding
Markov chain quasi-Monte Carlo}.
\newblock Available at {http://arxiv.org/abs/1311.1890[stat.CO]},
submitted, 2013.

\bibitem{DRZ2013}
J.~Dick, D.~Rudolf and H.~Zhu.
\newblock{Discrepancy bounds for uniformly ergodic Markov chain quasi-Monte Carlo}.
\newblock Available at {http://arxiv.org/abs/1303.2423 [stat.CO]},
submitted, 2013.

\bibitem{DG2005}
B.~Doerr, M.~Gnewuch and A.~Srivastav.
\newblock { Bounds and constructions for the star-discrepancy via $\delta$-covers}.
\newblock {\em Journal of Complexity}, 21, 691--709, 2005.



\bibitem{GC2014}
M. Gerber and N.~Chopin.
\newblock {Sequential quasi-Monte Carlo}.
\newblock Available at {http://arxiv.org/abs/1402.4039}{1402.4039 [stat.CO]}, 2014.

\bibitem{G2008}
M.~Gnewuch.
\newblock  Bracketing number for axis-parallel boxes and application to geometric discrepancy.
\newblock {\em Journal of Complexity}, 24, 154--172, 2008.

\bibitem{HeOwen2014}
Z.~He and A.B.~Owen.
\newblock{Extensible grids: uniform sampling on a space-filling curve}.
\newblock Available at {http://arxiv.org/abs/1406.4549 [stat.ME]}, 2014.

\bibitem{Heinrich2000}
S.~Heinrich.
\newblock The multilevel method of dependent tests.
\newblock In N.~Balakrishnan, V.B.~Melas, S.M.~Ermakov, editors,
{\em Advances in Stochastic Simulation Methods},  pages 47--62.
Birkh\"auser, 2000.

\bibitem{HNWW2001}
S.~Heinrich, E.~Novak, G.W.~Wasilkowski and H.~Wo\'{z}niakowski.
\newblock{ The inverse of the star-discrepancy depends linearly on the dimension}.
\newblock {\em Acta Arithmetica}, 96, 279--302, 2001.

\bibitem{Hormann1995}
W.~H\"ormann.
\newblock{ A reject technique  for sampling from T-concave distributions}.
\newblock{\em ACM Transactions on Mathematical Software},
21, 182-193, 1995.



\bibitem{HLD04}
W.~H\"ormann, J.~Leydold and G.~Derflinger.
\newblock {\em Automatic Nonuniform Random Variate Generation}.
\newblock Springer-Verlag, Berlin, 2004.


\bibitem{MC95}
W.J.~Morokoff and R.E.~Caflisch.
\newblock {Quasi-Monte Carlo integration}.
\newblock {\em Journal of Computational Physics}, 122, 218--230, 1995.



\bibitem{MC96}
B.~Moskowitz and R.E.~Caflisch.
\newblock Smoothness and dimension reduction in quasi-Monte Carlo methods.
\newblock {\em Mathematical and Computer Modelling}, 23, 37--54, 1996.


\bibitem{Pierre2008}
P.~L'Ecuyer, C.~Lecot and B.~Tuffin.
\newblock{A randomized quasi-Monte Carlo simulation method for Markov chains}.
\newblock{\em Operation Research}, 56, 958--975, 2008.

\bibitem{NO2013}
N.~Nguyen and G.~\"{O}kten.
\newblock {The acceptance-rejection method for low discrepancy sequences}.
\newblock  Available at {http://arxiv.org/abs/1403.5599 [q-fin.CP]}, 2014.


\bibitem{Niederriter1974}
L. Kuipers and H. Niederreiter.
\newblock{\em Uniform Distribution of Sequences}.
\newblock John Wiley, New York, 1974.

\bibitem{Niederriter1987}
H. Niederreiter.
 \newblock Point sets and sequences with small discrepancy.
 \newblock {\em Monatshefte f\"ur Mathematik}, 104, 273--337, 1987.


\bibitem{NW75}
H.~Niederreiter,~H and J.M.~Wills.
\newblock Diskrepanz und Distanz von Ma{\ss}en bez\"uglich konvexer und Jordanscher Mengen (German).
\newblock {\em Mathematische Zeitschrift}, 144, 125--134, 1975.

\bibitem{OwenbookDraft}
A.B.~Owen.
\newblock{Monte Carlo theory, methods and examples}.
\newblock Available at \url{http://www-stat.stanford.edu/~owen/mc/}.
Last accessed on 7 July 2014.


\bibitem{RobertCasella2004}
C.~Robert and G.~Casella,
\newblock {\em Monte Carlo Statistical Methods}.
\newblock  Springer-Verlag, New York, second edition, 2004.


\bibitem{TribbleThesis2007}
S.D.~Tribble.
\newblock{\em Markov chain Monte Carlo algorithms using completely uniformly distributed driving sequences}.
\newblock PhD thesis, Stanford University, 2007.


\bibitem{TribbleOwen2005}
S.D.~Tribble and A.B.~Owen,
\newblock { Constructions of weakly CUD sequences for MCMC}.
\newblock {\em Electronic Journal of Statistics}, 2, 634--660, 2008.

\bibitem{Wang1999}
X.~Wang.
\newblock{Quasi-Monte Carlo integration of characteristic functions and
    the rejection sampling method}.
\newblock{\em Computer Physics Communication}, 123, 16--26, 1999.

\bibitem{Wang00}
X.~Wang.
\newblock{Improving the rejection sampling method in quasi-Monte Carlo methods}.
\newblock{\em Journal of Computational and Applied Mathematics}, 114, 231--246, 2000.

\bibitem{ZD2014}
H.~Zhu and J.~Dick.
 \newblock Discrepancy bounds for deterministic acceptance-rejection samplers.
 \newblock {\em Electronic Journal of Statistics}, 8, 678-707, 2014.


\end{thebibliography}
\end{document}